\newif\ifThisIsArxivVersion  
    \newcommand{\cf}[1]{\textsf{#1}}
    \tikzset{%
        fwdrxn/.style={very thick, arrows={-Stealth[length=5pt,width=5pt]}},
        revrxn/.style={very thick, arrows={-Stealth[length=5pt,width=5pt,left]}},
        newt/.style={turq, opacity=0.15}
    }
    \tikzset{%
        EdgeShift/.style n args={2}{transform canvas={xshift={#1}, yshift={#2}}}, %
        Jnode/.style={circle, draw=black, thick,  fill=JnodeColor, fill opacity = 1, inner sep=0pt,minimum size=0.48cm, outer sep=2pt, font=\footnotesize},
        posEdge/.style={very thick, arrows={-Stealth[length=7.5pt,width=7.5pt]} },
        negEdge/.style={very thick, -| }
    }
    \definecolor{JnodeColor}{RGB}{235,230,237}
    \definecolor{ActivEdgeGreen}{RGB}{34,167,132}
    \definecolor{InhibEdgeRed}{RGB}{153,8,3}
    \definecolor{JnodeColorExternal}{RGB}{168, 165, 162}
    \definecolor{vblue}{RGB}{34, 69, 151}
\newcommand{\R}{\mathbb{R}}
\newcommand{\rr}{\R}
\newcommand{\rrp}{\R_{\geq}}
\newcommand{\rrpp}{\R_{>}}
\newcommand{\beqn}{\begin{eqnarray*}}
\newcommand{\eeqn}{\end{eqnarray*}}
\newcommand{\halmos}{\rule{1ex}{1.4ex}}
\newenvironment{myproof}{\noindent {\em Proof}.\ }{\hspace*{\fill}$\halmos$\medskip}
\newcommand{\epr}{\end{myproof}}
\newcommand{\bpr}{\begin{myproof}}
    \crefname{ex}{Example}{Examples}
    \crefname{thm}{Theorem}{Theorems} 
    \crefname{lem}{Lemma}{Lemmas}
    \crefname{prop}{Proposition}{Propositions}
    \crefname{cor}{Corollary}{Corollaries} 
    \crefname{conj}{Conjecture}{Conjectures} 
    \crefname{defn}{Definition}{Definitions}
    \crefname{rmk}{Remark}{Remarks} 
    \crefname{figure}{Fig.}{Figs.} 
	\newtheorem{thm}{Theorem}[section]
	\newtheorem{lem}[thm]{Lemma}
	\newtheorem{cor}[thm]{Corollary}
	\newtheorem*{thm*}{Theorem}
	\newtheorem*{cor*}{Corollary}
	\theoremstyle{definition} 
		\newtheorem{defn}[thm]{Definition}
		\newtheorem{ex}[thm]{Example}
    	\newtheorem{rmk}[thm]{Remark}	
\newcommand{\df}[1]{{\bf\emph{{#1}}}} 
\newcommand{\eq}[1]
    {\begin{align*}#1\end{align*}}
	\newcommand{\eqn}[1]
        {\begin{align}#1\end{align}}
\DeclareMathAlphabet\mathbfcal{OMS}{cmsy}{b}{n}
\newcommand{\mbc}[1]{\ensuremath{\mathbfcal{#1}}}
\newcommand{\vv}[1]{{\boldsymbol{#1}}}  
    \newcommand{\xx}{{\vv x}}
\newcommand{\mm}[1]{\mathbf{#1}}        
\newcommand{\kk}{k}
\newcommand\mbf[1]{\mathbf{#1}}  
\newcommand\mrm[1]{\mathrm{#1}}
\newcommand{\ehat}{\hat{\mbf e}}
\DeclareMathOperator{\sgn}{sgn}
\newcommand{\ratecnst}[1]{{\color{black}\footnotesize #1}}
\newcommand{\X}{\mathbf{X}} 
    \newcommand{\todoMINE}[1]{\todo{#1}}
    \newcommand{\todoMINE}[1]{}
        \newcommand{\citeVersion}[1]{{{#1}}}
        \newcommand{\thmIFFL}{{\color{black}{IFFL}}}
        \newcommand{\thmPFBL}{{\color{black}{PFBL}}}
        \newcommand{\thmNFBL}{{\color{black}{NFBL}}}
        \newcommand{\thmIOpath}{{\color{black}{I/O path}}}
        \newcommand{\citeVersion}[1]{{\color{black}{\cite[{#1}]{arxiv_self}}}}
        \newcommand{\thmIFFL}{{\color{black}{IFFL}}}
        \newcommand{\thmPFBL}{{\color{black}{PFBL}}}
        \newcommand{\thmNFBL}{{\color{black}{NFBL}}}
        \newcommand{\thmIOpath}{{\color{black}{I/O path}}}
\newcommand{\pfthmspacing}{\ifThisIsArxivVersion\else\vspace{-0.18cm}\fi%
        }
    \newcites{SM}{Appendix References}
\title{\LARGE \bf
    \ifThisIsArxivVersion 
        A necessary condition for nonmonotonic dose response, 
        with an application to a kinetic proofreading model 
        -- Extended version
    \else 
        A necessary condition for nonmonotonic dose response, 
        \\
        with an application to a kinetic proofreading model 
    \fi
}
\author{Polly Y. Yu$^{1}$
  and
Eduardo Sontag$^{2}$
\thanks{This work was partially supported by grants
AFOSR FA9550-22-1-031, FA9550-21-1-0289, and NSF/DMS-2052455.}%
\thanks{$^{1}$\mbox{University of Illinois Urbana-Champaign}
{\tt\footnotesize  pollyyu@illinois.edu}}%
\thanks{$^{2}$Northeastern University
{\tt\footnotesize e.sontag@northeastern.edu}}}
\begin{document}

\maketitle
\ifThisIsArxivVersion
    \thispagestyle{plain}
    \pagestyle{plain}
\fi

\begin{abstract}
    Steady state nonmonotonic (\lq\lq biphasic\rq\rq) dose responses are often observed in experimental biology, which raises the control-theoretic question of identifying which possible mechanisms might underlie such behaviors. It is well known that the presence of an incoherent feedforward loop (IFFL) in a network may give rise to a nonmonotonic response. It has been conjectured that this condition is also necessary, i.e. that a nonmonotonic response implies the existence of an IFFL. In this paper, we show that this conjecture is false, and in the process prove a weaker version: that either an IFFL must exist or both a positive feedback loop and a negative feedback loop must exist. Towards this aim, we give necessary and sufficient conditions for when minors of a symbolic matrix have mixed signs. Finally, we study in full generality when a model of immune T-cell activation could exhibit a steady state nonmonotonic dose response.
\end{abstract}

\section{Introduction and background}
\label{sec:intro}

A dose response curve plots the steady state value of an output (the \lq\lq response\rq\rq) for a given input (the \lq\lq dose\rq\rq). A nonmonotonic, or  biphasic, dose response curve is either bell-shaped or U-shaped, characterized by low (respectively high) responses at both low dosage and high dosage. Such curves are commonly observed in biology, including the activation of immune T-cells~\cite{LeverLimKrugerNguyenEtAl2016}. In this work, we study what mechanisms underlie steady state biphasic responses. 

It is well-known that the presence of an incoherent feedforward loop (IFFL) can result in a steady state response that is nonmonotonic; see e.g., \cite{KimKwonCho2008} or the references in \cite{Sontag2010}. Negative feedback loops (NFBLs) or IFFLs are necessary, as otherwise the theory of monotone systems implies that the steady state response (\lq\lq input-to-state characteristic\rq\rq) will be monotonic on input values~\cite{AngeliSontag2003, AngeliSontag2013}. (In fact, for monotone systems, even transient responses at any given time also behave monotonically on input magnitude.)

It has been argued that IFFLs are also necessary. For example, in \cite[main text \& SI  section \lq\lq Negative feedback cannot produce a bell-shaped dose-response\rq\rq]{LeverLimKrugerNguyenEtAl2016}, the authors stated that \lq\lq models without an incoherent feed-forward loop but with negative feedback\ldots\ cannot produce a bell-shaped dose-response.\rq\rq\  That is, the authors conjectured that a biphasic response implies the existence of an IFFL. They justified this claim by performing a numerical search over network architectures, concluding in the statement that \lq\lq examining these 274 networks showed that the basic mechanism underlying all compatible networks was KPL-IFF\rq\rq\ (which is an IFFL from input to output). We show that this conjecture is false for general ODE systems by providing a counterexample. Furthermore we prove that a weaker version is true: either an IFFL must exist, or both a NFBL and a positive feedback loop (PFBL) must exist. 

Towards this aim, we define the notions of \emph{quasi-adaptation} (when the steady state response curve has a vanishing derivative), \emph{biphasic response} (when the derivative changes sign), and finally \emph{stable biphasic response} (where the relevant steady state curve is a stable branch). Quasi-adaptation and biphasic response are algebraic properties, in contrast to stable biphasic response, which is dynamical in nature. 
We also give necessary and sufficient conditions for when minors of a symbolic matrix have mixed signs, thereby giving necessary conditions for quasi-adaptation. Necessary conditions for stable biphasic response follow from monotone systems theory. The following implication diagrams summarize our results, where all the relevant terms will be defined rigorously. 
\begin{center}
\vspace{-0.07cm}
\ifThisIsArxivVersion
\begin{tikzpicture}
\else
\begin{tikzpicture}[yscale=0.77]
\fi
    \node at (-1.75,0)  {Quasi-adaptive};
    \node at (1.75,0.03) {IFFL$^1$ or PFBL$^2$}; 
    \node at (0,0) {$\implies$};
    \node at (-1.75,-0.44)  {$\Uparrow$};
    \node at (-1.75,-1)  {Biphasic}; 
    \node at (-1.75,-2) {Stable biphasic};
    \node at (1.75,-2+0.04) {IFFL$^1$ or NFBL$^3$};
    \node at (-1.75,-1.44)  {$\Uparrow$};
    \node at (0,-2) {$\implies$};
\end{tikzpicture}
\vspace{-0.18cm}
\end{center}
\footnotetext[1]{An IFFL from the input $x_1$ to the output $x_j$, defined only when $j\neq 1$.}%
\footnotetext[2]{When $j=1$, a PFBL that is disjoint from the node $x_1$. When $j \neq 1$, a PFBL that is vertex-disjoint from an input-output path.}%
\footnotetext[3]{A NFBL that is reachable from the input $x_1$ and to the output $x_j$.}%
\stepcounter{footnote}%
\stepcounter{footnote}%
\stepcounter{footnote}%

Finally, inspired by \cite{LeverLimKrugerNguyenEtAl2016}, we consider a model for T-cell activation that consists of two components: a kinetic proofreading network whereby antigens bind to T-cell receptors (TCRs), and a downstream network consisting solely of activation and inhibition (defined in \Cref{sec:activation-inhibition}). We show that in order for an output from the downstream network to exhibit stable biphasic response to antigen level, this network necessarily either contains an IFFL, or it contains both a PFBL and a NFBL.

\subsection{Notations}

The following notations are used throughout this paper.
\begin{itemize}
    \item $\rrp^n$ and $\rrpp^n$ are sets of vectors with nonnegative and positive components respectively. 
    \item $\ehat_i$ is the $i$th orthonormal vector of Euclidean space. 
    \item $\ehat_{ij}$ is the matrix with a $1$ in the $(i,j)$ position, and 0 everywhere else. 
    \item  $\mm J(\gamma, \delta)$ is the submatrix of $\mm J$ with rows indexed by $\gamma$ and columns by $\delta$; $\hat i$ refers to all indices except $i$. If $\gamma = \delta$, we write $\mm J(\gamma)$. 
    \item $\mm J[\gamma, \delta] = \det \mm J(\gamma, \delta)$. We write $\mm J[\gamma]$ for $\mm J[\gamma, \gamma]$. 
    \item $[n] = \{ 1,2,\ldots, n\}$.
\end{itemize}

In \Cref{sec:mixed-signs}, we work exclusively with a \df{signed symbolic matrix} $\mbc J$, whose $(i,j)$ entry is either $a_{ij}$, $-a_{ij}$, or $0$, where $a_{ij}$ is a variable. A polynomial, e.g., $\det \mbc J$, is said to have \df{mixed signs} if it has both a positive and a negative term.

\section{Signed symbolic matrix}
\label{sec:mixed-signs}

Here, we give necessary and sufficient conditions for when a (principal or non-principal) minor of a signed symbolic matrix with negative diagonals
\begin{align*}
    \mbc J = \begin{pNiceArray}{ccccc}
        - & * & \cdots & \cdots & * \\ 
        * & - & * & \cdots & * \\ 
        \vdots &  & \Block{1-2}<>{\raisebox{5pt}{\rotatebox[origin=c]{10}{$\ddots$}}} && \vdots \\
        * & \cdots & \cdots & * & -
    \end{pNiceArray}
\end{align*}    
has mixed signs. Typically, $\mbc J$ comes from the Jacobian matrix $\mm J(\vv x)$ of an ODE system $\dot{\vv x} = \vv f(\vv x)$; in such cases, we assume $[\mm J(\vv x)]_{ij} = \partial_j f_i(\vv x)$ has constant sign for all $\vv x$.

\subsection{Graph of a signed symbolic matrix}

Let $\mbc J$ be a $n \times n$ signed symbolic  matrix with negative diagonals. It is associated to its \df{J-graph} $G$, a digraph with $n$ nodes and signed edges in the typical sense: for any nodes $i, j$, there is a positive (resp. negative) edge from $i$ to $j$ if $[\mbc J]_{ji} > 0$ (resp. $[\mbc J]_{ji} < 0$); otherwise there is no edge from $i$ to $j$. We denote such an edge as $(i,j)$. By assumption on $\mbc J$, every node has a negative self-loop. 

We recall some commonly used terms. A \df{walk} is a nonempty sequence of directed edges joining a sequence of vertices. A \df{path} is a simple walk, i.e., no repeating vertices possibly with the exception of the first and last node, in which case, it is a \df{cycle}. For a subset of edges $E$, its \df{sign} is $\sgn(E) \coloneqq \prod_{e\in E} \sgn(e)$. 
A subset of nodes $X$ is said to be \df{reachable from node $i$} (resp. \df{to node $j$}) if for each $x \in X$, there exists a walk from $i$ to $x$ (resp. from  $x$ to $j$).
A subset of edges $E$ is a \df{disjoint cycle cover} of $G$ if each connected component is a cycle and $E$ covers all the nodes of $G$. Note that cycles in a disjoint cycle cover are vertex-disjoint, and some cycles may be self-loops. 

A \df{feedback loop}  is a cycle $C$ of length at least two, i.e., $|C| \geq 2$. It is \df{positive} (resp. \df{negative}) if $\sgn(C) > 0$ (resp. $\sgn(C) < 0$). A \df{feedforward loop from $i$ to $j$} is a pair of paths $P_1 \neq P_2$ that originate from node $i$, and terminate at node $j \neq i$. It is  \df{coherent} (resp. \df{incoherent})  if  the signs of $P_1,P_2$ are the same (resp. opposite). 
For simplicity, we refer to the above as PFBL, NFBL, CFFL, and IFFL. With $i$  as input, and $j \neq i$ as output, a path from $i$ to $j$ is an \df{input-output path} (I/O path). 

Finally, by a subgraph of $G$, we mean a subset of edges and all vertices incident on them.  Denote by $G(\hat i,\hat j)$ the subgraph obtained by deleting all incoming edges to $i$ and all outgoing edges from $j$. Where $\gamma$ is a subset of nodes, by $G(\gamma)$ we mean the subgraph obtained by keeping only the nodes in $\gamma$ and edges incident on those nodes. For example $G(\hat 1)$ is the subgraph obtained by deleting the node $1$ along with any edges coming into or going out of $1$.

\subsection{Mixed signs in minors of a signed symbolic matrix}

Consider a principal minor $\mbc J[\gamma]$.  Previous works on multiple steady states gave conditions for when $\mbc J[\gamma]$ has mixed signs~\cite{Banaji2010, Soule2003}. 

\begin{lem} 
\label{lem:symbolic-principal_minor}
    Let $\mbc J$ be a $n \times n$ signed symbolic  matrix with negative diagonals, and $G$ be its J-graph. For any $\gamma \subseteq [ n]$, the polynomial $\mbc J[\gamma]$ has mixed signs if and only if $G(\gamma)$ has a \thmPFBL.
\end{lem}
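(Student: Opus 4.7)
The plan is to expand the minor $\mbc{J}[\gamma]$ as a signed sum over disjoint cycle covers of $G(\gamma)$, and then exploit the fact that the entries $a_{ij}$ are algebraically independent variables so that distinct cycle covers yield distinct monomials and no cancellation occurs beyond entries that are identically zero. The key computation is to show that the sign attached to a given cycle cover has a clean expression in terms of the signs of its feedback loops, which makes both directions of the equivalence transparent.

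First, I would write the Leibniz expansion $\mbc{J}[\gamma] = \sum_{\sigma} \mrm{sgn}(\sigma) \prod_{i \in \gamma} [\mbc{J}]_{i,\sigma(i)}$, and observe that each permutation $\sigma$ of $\gamma$ decomposes into disjoint cycles, which in turn correspond precisely to the disjoint cycle covers of $G(\gamma)$ (self-loops are permitted because every node carries a negative self-loop in $\mbc{J}$). A cycle of length $\ell$ contributes $\mrm{sgn}$-factor $(-1)^{\ell-1}$, and the product of its matrix entries equals its sign $s$ in $G$ (times positive magnitudes). For a self-loop $\ell = 1$, the entry sign is $-1$ (negative diagonal), contributing $-1$ to the overall sign. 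Combining these, a cycle cover with $k_s$ self-loops and $m$ feedback loops of lengths $\ell_1, \ldots, \ell_m$ and $G$-signs $s_1, \ldots, s_m$ contributes a monomial of sign
\begin{align*}
(-1)^{k_s} \prod_{j=1}^{m} (-1)^{\ell_j - 1} s_j \;=\; (-1)^{|\gamma| - m} \prod_{j=1}^{m} s_j,
\end{align*}
using $k_s + \sum \ell_j = |\gamma|$. Since the $a_{ij}$ are algebraically independent and each permutation selects entries from a distinct set of positions, no two cycle covers contribute to the same monomial; hence $\mbc{J}[\gamma]$ has mixed signs iff two cycle covers yield opposite signs in the above formula.

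For the $(\Rightarrow)$ direction (contrapositive), if $G(\gamma)$ contains no PFBL, then every feedback loop in $G(\gamma)$ is negative, so $s_j = -1$ for every $j$ in every cycle cover. The sign formula then reduces to $(-1)^{|\gamma| - m}(-1)^m = (-1)^{|\gamma|}$, independent of the cycle cover, so all surviving monomials share the same sign and $\mbc{J}[\gamma]$ does not have mixed signs. For the $(\Leftarrow)$ direction, assume $G(\gamma)$ contains a PFBL $C$. The all-self-loops cover exists (by the negative diagonal assumption) and has sign $(-1)^{|\gamma|}$; the cover consisting of $C$ together with self-loops on the remaining vertices of $\gamma$ has $m=1$ and $s_1 = +1$, giving sign $-(-1)^{|\gamma|}$. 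These two monomials have opposite signs and cannot cancel, proving mixed signs.

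The routine bookkeeping is the sign computation, and the only nontrivial point is the observation that algebraic independence of the $a_{ij}$ prevents cross-cover cancellations; once that is in place, both directions reduce to an elementary parity count. I expect the main subtlety in the write-up to be phrasing the permutation-to-cycle-cover correspondence in a way consistent with the paper's edge-direction convention $[\mbc{J}]_{ji} \leftrightarrow (i,j)$, so that the $G$-sign of a cycle in $G(\gamma)$ matches the product of matrix entries along the corresponding orbit of $\sigma$.
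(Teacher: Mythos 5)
Your proposal is correct and follows essentially the same route as the paper's proof: a Leibniz expansion indexed by disjoint cycle covers, the observation that algebraic independence of the entries rules out cross-term cancellation, and a parity count showing each cover's sign is $(-1)^{|\gamma|}$ times $(-1)$ raised to the number of positive feedback loops it contains. The two explicit covers you exhibit in the converse direction (all self-loops versus a PFBL padded with self-loops) are exactly the ones used in the paper.
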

\begin{proof}
    Since the proof for $\gamma \subsetneq [n]$ is similar, we assume $\gamma = [n]$.  By Leibniz formula, $\det \mbc J = \sum_{\sigma \in S_n} T_\sigma$, where the monomial $T_\sigma \coloneqq \sgn(\sigma) \prod_{i=1}^n [\mbc J]_{\sigma(i), i}$ is nonzero if and only if $[\mbc J]_{\sigma(i),i} \neq 0$ for all $i$. Thus any nonzero term $T_\sigma$ is in a one-to-one correspondence with a disjoint cycle cover $E_\sigma$. The monomials for different $\sigma$'s are algebraically independent, so $\det \mbc J$ has mixed signs if and only if $\sgn(T_\sigma)\sgn(T_\eta) = -1$ for some $\sigma, \eta \in S_n$. 
    
    For any $T_\sigma\neq 0$, let $\sigma = \tau_1\cdots \tau_p$ be its nontrivial cycle decomposition, and $\Theta :=\{ i : \sigma(i) = i\}$ its set of fixed points. Recall that $\sgn(\sigma) = \prod_j \sgn(\tau_j)$ and $\sgn(\tau_j) = (-1)^{|\tau_j|+1}$. As $[\mbc J]_{\sigma(i),i}\neq 0$ if and only if $(i,\sigma(i))$ is an edge in $G$, each $\tau_j$ corresponds to a FBL $C_j$ in $E_\sigma$, and each $j \in \Theta$ corresponds to a self-loop. 
    The sign of $T_\sigma$ is given by 
    \ifThisIsArxivVersion
        \begin{align*}
            \sgn(T_\sigma) = 
            \left[ \prod_{i \in \Theta} \sgn([\mbc J]_{i,i}) \right] 
                \left[\prod_{j=1}^p (-1)^{|\tau_j|+1} \sgn(C_j) \right]
            = (-1)^{n+p} \prod_{j=1}^p \sgn(C_j), 
        \end{align*}
    \else
        $\left[ \prod_{i \in \Theta} \sgn([\mbc J]_{i,i}) \right] 
            \left[\prod_{j=1}^p (-1)^{|\tau_j|+1} \sgn(C_j) \right] 
        = (-1)^{n+p} \prod_{j=1}^p \sgn(C_j)$, 
    \fi
    which depends on the number of NFBL among the $p$ feedback loops in $E_\sigma$. Thus $\sgn(T_\sigma) = (-1)^{n} (-1)^{\#\text{PFBL in $E_\sigma$}}$. 
    
    The fact that every node has a self-loop implies that there is a term $T_\mrm{id}$ with sign $(-1)^n$. Clearly the lack of PFBL implies that all terms have sign $(-1)^n$. Conversely, if there is at least one PFBL $C$, then $C$ and self-loops on nodes not traversed by $C$ together form a disjoint cycle cover, whose corresponding term has sign $(-1)^{n+1}$. 
\end{proof}

\begin{lem}
\label{lem:symbolic-non-principal_minor}
    Let $\mbc J$ be a $n \times n$ signed symbolic  matrix with negative diagonals, and $G$ be its J-graph. Designate $i$ as input and $j$ as output with $j \neq i$.  
    \begin{enumerate}
        \item The non-principal minor $\mbc J[\hat i, \hat j]$ is identically zero if and only if $G$ has no \thmIOpath. 
        \item Suppose $G$ has an \thmIOpath. Then $\mbc J[\hat i, \hat j]$ has mixed signs if and only if either $G$  has an \thmIFFL\ from $i$ to $j$, or $G$ has a \thmPFBL\  and an \thmIOpath\ that are vertex-disjoint.
    \end{enumerate}
\end{lem}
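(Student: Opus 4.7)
The plan is to adapt the Leibniz expansion used in the proof of \Cref{lem:symbolic-principal_minor} to the non-principal minor $\mbc J[\hat i, \hat j]$. Each nonzero term corresponds to a bijection $\sigma\colon \hat j \to \hat i$ with $[\mbc J]_{\sigma(k),k} \neq 0$ for every $k \in \hat j$, so the core task is to identify the graph-theoretic structure of such a $\sigma$ and to compute the sign of its monomial. In the functional graph of $\sigma$, every node in $\hat j$ has out-degree one and every node in $\hat i$ has in-degree one; hence $i$ has in-degree zero and $j$ has out-degree zero. Following arrows from $i$ therefore terminates at $j$, producing an I/O path $P$, and the remaining nodes decompose into a disjoint cycle cover of $V(G) \setminus V(P)$ (self-loops are always available because the diagonal entries are nonzero). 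Conversely, every such pair (I/O path $P$, cycle cover of complement) is realized by some $\sigma$, and distinct pairs use distinct edge sets of $G$, hence yield algebraically independent monomials. Thus $\mbc J[\hat i, \hat j] \equiv 0$ iff $G$ has no I/O path, settling part~(1).

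Next I would compute $\sgn(T_\sigma)$ by mimicking the bookkeeping of \Cref{lem:symbolic-principal_minor}. Identifying $\hat j$ with $\hat i$ via $i \mapsto j$ and $k \mapsto k$ otherwise, $\sigma$ becomes a permutation $\tilde \sigma$ of $\hat j$ in which $P$ closes into a pseudo-cycle of length $|E(P)|$, while the remaining cycles of $\tilde \sigma$ are precisely those of the cover. Combining the sign of $\tilde \sigma$, the product of edge signs ($-1$ per self-loop, $\sgn(C_k)$ per feedback loop, $\sgn(P)$ from $P$), and a sign $\epsilon_{ij}$ coming from the order-preserving identification of $\hat j$ with $\hat i$, a direct computation yields
\[
\sgn(T_\sigma) \;=\; \epsilon_{ij}\,(-1)^n\,\sgn(P)\,(-1)^{\#\mrm{PFBL\ in\ cover}},
\]
so up to a constant depending only on $i,j,n$, the sign of each term is $\sgn(P)\cdot (-1)^{\#\mrm{PFBL\ in\ cover}}$.

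With this formula, part~(2) splits into two short parity arguments. For sufficiency: an IFFL $(P_1,P_2)$ from $i$ to $j$ gives two terms (each paired with the all-self-loop cover of its complement) whose signs are $\sgn(P_1)$ and $\sgn(P_2)$, which differ by hypothesis; a PFBL $C$ vertex-disjoint from some I/O path $P$ gives two terms paired with $P$, one using the all-self-loop cover and the other using $C$ together with self-loops on the remaining nodes, whose signs differ by a single PFBL-parity flip. For necessity: assume $\mbc J[\hat i, \hat j]$ has mixed signs and no IFFL from $i$ to $j$ exists, so that all I/O paths share a common sign. Then any two opposite-signed terms must differ in the parity of their PFBL counts; in particular, at least one cover contains a PFBL, and by construction this PFBL is vertex-disjoint from the I/O path of that term.

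The main obstacle I foresee is the careful sign bookkeeping in the non-principal case, because $\sigma$ is a bijection between two distinct ordered sets rather than a permutation of one. The Leibniz sign therefore acquires an extra contribution from the identification of $\hat j$ with $\hat i$, and one must verify that this contribution is a single constant $\epsilon_{ij}$ independent of $\sigma$, so that it cancels in the PFBL-parity comparison across terms. Once this constant is absorbed into a universal prefactor, the mixed-signs analysis is structurally identical to the parity count used for principal minors in \Cref{lem:symbolic-principal_minor}.
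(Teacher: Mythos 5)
Your proposal is correct and follows essentially the same route as the paper: both decompose each nonzero term of the minor into an I/O path plus a vertex-disjoint cycle cover of the remaining nodes, arrive at the same sign formula $\sgn(T_\sigma)=\mathrm{const}\cdot\sgn(P)\,(-1)^{\#\mathrm{PFBL}}$, and run the same parity arguments for part (2). The only cosmetic difference is that the paper closes the path into a cycle by bordering the submatrix with an auxiliary positive edge from $j$ to $i$, whereas you do the equivalent bookkeeping directly via the identification of $\hat j$ with $\hat i$ and the constant $\epsilon_{ij}$.
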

\begin{proof}
    Without loss of generality, let $i = 1$ and $j = n$, so
    \begin{align*}
        \mbc J[\hat 1, \hat n]  &= (-1)^{n-1}\det
        \begin{pNiceArray}{ccc|c}
            0 & \cdots & 0 & {1} \\ 
            \hline 
            \Block{3-3}<>{\mbc J(\hat 1, \hat n)} & & & 0 \\ 
            &&& \vdots \\ 
            &&& 0 
        \end{pNiceArray}.
    \end{align*}
    Denote the above matrix by $\mbc A$. Its J-graph $H$ can be obtained from $G(\hat 1, \hat n)$ adding a positive edge from $n$ to $1$. By Leibniz formula, $\mbc J[\hat 1, \hat n] = (-1)^{n+1} \sum_{\sigma \in S_n} \sgn(\sigma) [\mbc A]_{\sigma(i),i}$, where we denote each term as $T_\sigma$. Clearly $T_\sigma \neq 0$ if and only if $\sigma(n) = 1$, in which case let $\sigma = \tau_1 \cdots \tau_p$ be its nontrivial cycle decomposition where $\tau_1(n) = 1$. Since any I/O path $P$ in $G$ is bijectively associated to the cycle  $P \cup (n,1)$ in $H$, we conclude that $\mbc J[\hat 1,\hat n] \not\equiv 0$ if and only if an I/O path is present in $G$.
    
    Now suppose $G$ has an I/O path, and consider any $T_\sigma \neq 0$, which corresponds to a disjoint cycle cover $\tilde{E}_\sigma$ of $H$. Then $E_\sigma \coloneqq \tilde{E}_\sigma \setminus (n,1)$
    is a vertex-disjoint collection of cycles and an I/O path in $G(\hat 1, \hat n)$. Each $T_\sigma \neq 0$ is uniquely associated to such a $E_\sigma$. Clearly there is a one-to-one correspondence between self-loops in $\tilde E_\sigma$ and those in $E_\sigma$; between the cycles $C_2,\ldots, C_p$ in $G(\hat 1, \hat n)$ and $H$; between $C_1$ and $P$. 
    Because 
    \ifThisIsArxivVersion
        \begin{align*}
        \sgn(T_\sigma) = (-1)^{n-1} \left[ (-1)^{n+p} \sgn(P) \prod_{j=2}^p \sgn(C_j)  \right] ,
        \end{align*}
    \else 
        $\sgn(T_\sigma) = (-1)^{n-1} \left[ (-1)^{n+p} \sgn(P) \prod_{j=2}^p \sgn(C_j)  \right] $, 
    \fi
    we have $\sgn(T_\sigma) = \sgn(P) (-1)^{\#\text{PFBL in $E_\sigma$}}$. 

    To prove one direction of 2, suppose two I/O paths $P, Q$ form an IFFL. Choosing $P$ and self-loops on all nodes not visited by $P$ gives a vertex-disjoint collection $E_\sigma$ that covers all nodes. Similarly, let $E_\eta$ be such a collection containing $Q$. As neither collection has any FBL at all, $\sgn(T_\sigma) \sgn(T_\eta) = \sgn(P) \sgn(Q) = -1$. If instead, suppose there is a PFBL $C$ that is vertex-disjoint from the I/O path $P$. The subgraph $E_\xi$ consisting of $P, C$ and self-loops for all remaining nodes is associated to a nonzero term $T_\xi$. Moreover, $\sgn(T_\sigma) \sgn(T_\xi) = -1$. 

    For the other direction, suppose $\mbc J[\hat 1, \hat n]$ has mixed signs, say $T_\sigma T_\eta < 0$. Let $E_\sigma, E_\eta$ be the corresponding subgraphs, where $E_\sigma$ contains the I/O path $P$, and $E_\eta$ contains $Q$. Either $P$ and $Q$ form an IFFL, or else $-1 = (-1)^{\#\text{PFBL in $E_\sigma$} \,+\,\, \#\text{PFBL in $E_\eta$}}$, 
    so there is a PFBL $C$ in either $E_\sigma$ or $E_\eta$, disjoint from $P$ or $Q$ respectively. 
\end{proof}

\section{Steady state response of an input-output system}
\label{sec:biphasic}

In this section, we consider the steady state response curves of input-output systems where the control $u$ affects only one variable; without loss of generality, let it be $x_1$.

Throughout this work, we consider the following input-output system with output $y = x_j$ for some $j$:  
\begin{align}
\label{eq:syst_main}
    \dot{\vv x} &= {\vv F(\vv x, u) \coloneqq{} } 
        \vv f(\vv x) + \ehat_1 {g(u)},  
\end{align}
where $\vv f : \X \subset \R^n \to \R^n$ is $C^1$, $u \in J$ for some $J \subset \rr$. We assume that for all $\vv x\in \X$, the entries in the Jacobian matrix $\frac{\partial \vv f}{\partial \vv x}$ have constant signs, and $\frac{\partial f_i}{\partial x_i} < 0$ for all $i$. We also assume $\partial_u g \neq 0$ is constant in sign. 
We denote a solution to \eqref{eq:syst_main} with initial state $\xx(0)$ by $\phi(t,\xx(0),u)$. 

In addition, we assume
\ifThisIsArxivVersion
    \begin{align}
    \label{eq:assumption-A0}
        \forall u_0 \in J \,\, \exists \vv x_0\in \X \text{ such that } {\vv F}(\vv x_0, u_0) = \vv 0,  
        \text{ and } \frac{\partial \vv F}{\partial \vv x}(\vv x_0, u_0) \text{ has full rank}.
     \tag{A0}
    \end{align}
    Sometimes we make the stronger assumption:
    \begin{align}
    \begin{gathered}
    \label{eq:assumption-A1}
        \forall u_0 \in J \,\, \exists \vv x_0\in \X \text{ such that } {\vv F}(\vv x_0, u_0) = \vv 0, 
        \,
        \frac{\partial \vv F}{\partial \vv x}(\vv x_0, u_0) \text{ is Hurwitz, and } \\ 
        \exists  \epsilon>0 \colon \forall u \in B_\epsilon(u_0)  \,\,  \forall \vv x(0)\in \X\,\,
        \exists \vv x^*(u)\in \X \colon  
         {\vv F}(\vv x^*(u), u) = \vv 0 \text{ and } \phi(t,\vv x(0), u) \to \vv x^*(u).
    \end{gathered} \tag{A1}
    \end{align}
\else 
    \begin{align}
    \begin{gathered}
    \label{eq:assumption-A0}
        \forall u_0 \in J \,\, \exists \vv x_0\in \X \text{ such that } {\vv F}(\vv x_0, u_0) = \vv 0,  
        \\ 
        \text{and } \frac{\partial \vv F}{\partial \vv x}(\vv x_0,u_0) \text{ has full rank}.
    \end{gathered} \tag{A0}
    \end{align}
    Sometimes we make the stronger assumption:
    \begin{align}
    \begin{gathered}
    \label{eq:assumption-A1}
        \forall u_0 \in J \,\, \exists \vv x_0\in \X \text{ such that } {\vv F}(\vv x_0, u_0) = \vv 0, 
        \\ 
        \frac{\partial \vv F}{\partial \vv x}(\vv x_0, u_0) \text{ is Hurwitz, and } \\ 
        \exists  \epsilon>0 \colon \forall u \in B_\epsilon(u_0)  \,\, \forall \vv x(0)\in \X\,\, 
        \exists \vv x^*(u)\in \X \colon  
        \\
         {\vv F}(\vv x^*(u), u) = \vv 0 \text{ and } \phi(t,\vv x(0), u) \to \vv x^*(u). 
    \end{gathered}\tag{A1}
    \end{align}
\fi%
Assumption (A0) insures the existence of smooth curves of equilibria: by the Implicit Function Theorem (IFT), we know that for every $u_0$ and every $\xx_0$ such that ${\vv F}(\xx_0,u_0) = \vv 0$, there is an $\epsilon>0$ and a continuously differentiable steady state curve $\vv x^*(u)$ for $u \in B_\epsilon(u_0)$ with $\vv x^*(u_0)=\xx_0$.

The set of assumptions in (A0) and (A1) are somewhat redundant. Clearly (A1) implies (A0), since a Hurwitz matrix has full rank. In addition, since eigenvalues depend continuously on matrix entries, the Hurwitz property implies that  $\vv x^*(u)$, insured to exist by the IFT, has the property that $\vv x^*(u)$ is asymptotically stable for each $u$ in a possibly smaller neighborhood of $B_{\epsilon}(u_0)$. Moreover, one could use a converse Lyapunov theorem to guarantee an invariant region for small variations in the nominal $u_0$.

As is often done, we can depict the input and output of \eqref{eq:syst_main} on the J-graph of $\frac{\partial \vv f}{\partial \vv x}$, by adding an input edge and an output edge. This graph is the \df{J-graph of the input-output system} \eqref{eq:syst_main}.

Intuitively, a nonmonotonic steady state response for $x_j$ requires $\partial_u x_j^*$ to change sign. We define \emph{quasi-adaptation} as when the derivative vanishes, a phenomenon also known as \emph{infinitesimal homeostasis}~\cite{GolubitskyStewart2017}.
\begin{defn}
\label{def:quasi-adaptative}\label{def:biphasic}
    The input-output system \eqref{eq:syst_main} under assumption (A0) with output $x_j$ is said to be \df{quasi-adaptive} if there exists $u_0 \in J$ such that $\partial_u x_j^*(u_0) = 0$. The system is said to exhibit \df{biphasic response} if there exist $u_1, u_2 \in J$ such that $[\partial_u x_j^*(u_1)][\partial_u x_j^*(u_2)] < 0$.  
\end{defn} 

Quasi-adaptation and biphasic response are properties of a branch of steady states. There is no a priori assumption of uniqueness or stability. \Cref{ex:biphasic_unstable} gives a system with two branches of steady states, where only the unstable branch is biphasic (\Cref{fig:biphasic_unstable-ss_curves}). For the purpose of applications, one is often interested in a stable biphasic branch. 
\begin{defn}
\label{def:biphasic-stable}
    The system \eqref{eq:syst_main} under assumption (A1) with output $x_j$ is said to exhibit \df{stable biphasic response} if $x_j^*(u)$ is biphasic and asymptotically stable.
\end{defn}
\pfthmspacing

\subsection{Conditions for quasi-adaptation and biphasic response}

We give necessary conditions for quasi-adaptation and  biphasic response under assumption (A0). We treat separately the cases when the input and output are the same vs. distinct.

\begin{lem}
\label{lem:Cramer}
    Consider the system  \eqref{eq:syst_main} under assumption (A0) with output $x_j$. Let $\vv x^*(u)$ be a steady state curve defined for all $u \in J$, and $\mm J^* \coloneqq \frac{\partial \vv f}{\partial \vv x}(\vv x^*(u))$. Then 
    \begin{align}\label{eq:lem-Cramer}
        \frac{\partial x_j^*}{\partial u} =  
            (-1)^j \partial_u g \,
            \frac{
            \mm J^*[\hat 1, \hat j]}{\det \mm J^*} . 
    \end{align}
\end{lem}
\begin{proof}
    Along the steady state curve, $\vv 0  = \vv f(\vv x^*) + \ehat_1 g(u)$. Implicitly differentiating, $\vv 0 = \mm J^* \partial_u \vv x^* + \ehat_1 \partial_u g$, which when solved using Cramer's rule gives the result.
\end{proof}

Let  $\mbc J$ be the signed symbolic matrix consistent with $\mm J(\vv x) = \frac{\partial \vv f}{\partial \vv x}$, i.e., $[\mbc J]_{ij} > 0$ if and only if $\frac{\partial f_i}{\partial x_j} > 0$ and similarly for zero and negative entries. 
If $\mbc J[\hat 1, \hat j] \equiv 0$, then $\mm J(\vv x)[\hat 1, \hat j] = 0$ for all $\vv x \in \X$. For $\mm J(\xx)$ coming from \eqref{eq:syst_main}, this implies that $x_j^*(u)$ is independent of $u$, i.e., $x_j^*$ exhibits \lq\lq perfect adaptation\rq\rq\ (in control-theoretic terms, zero DC gain, or disturbance rejection of constant disturbances).  
Quasi-adaptation, thus biphasic response, implies $\mbc J[\hat 1, \hat j]$, if nontrivial, has mixed signs.

\begin{lem}
\label{lem:alg}
    Consider the system  \eqref{eq:syst_main} under assumption (A0) with output $x_j$. Let $\mbc J$ be the signed symbolic matrix consistent with $\frac{\partial \vv f}{\partial \vv x}$. If the polynomial $\mbc J[\hat 1, \hat j]$ is not identically zero and it has no mixed signs, then the system cannot be quasi-adaptive, nor can it exhibit biphasic response. 
\end{lem}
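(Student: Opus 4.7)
The plan is to read off the formulas for $\partial_u x_j^*(u)$ from \Cref{lem:Cramer} and show that their sign is determined entirely by $\mm J^*[\hat 1, \hat j]$. In all three control modes the denominator equals $\det \frac{\partial \vv F}{\partial \vv x}$ evaluated at the equilibrium, which is nonzero by assumption (A0). The remaining prefactors in the numerator are $(-1)^j$, $(x_T - x_1^*)$, or $x_1^*$: the first is constant, and the latter two are strictly positive at steady state by \Cref{lem:activation-inhibition-ODE-elimin}, which rules out $x_1^* \in \{0, x_T\}$. Hence $\partial_u x_j^*(u)$ and $\mm J^*[\hat 1, \hat j]$ differ only by a strictly nonzero factor of known sign.

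Next I would translate the symbolic hypothesis into a pointwise sign statement. Each monomial in $\mbc J[\hat 1, \hat j]$ has the form $\pm \prod a_{ij}$; evaluating at $a_{ij} = |[\mm J^*]_{ij}(\vv x^*)|$ produces a strictly positive real times the coefficient sign of the monomial, because the sign pattern of $\frac{\partial \vv f}{\partial \vv x}$ is constant on $\X$ and so nonzero entries are strictly nonzero. The hypothesis that $\mbc J[\hat 1, \hat j]$ is not identically zero and has no mixed signs means that, after cancellation, at least one monomial survives and all surviving coefficients share one sign, so the evaluated sum is strictly nonzero of that fixed sign. Combining this with the first paragraph, $\partial_u x_j^*(u)$ has constant strict sign on $J$, precluding both quasi-adaptation (derivative vanishing at some $u_0$) and biphasic response (derivative changing sign).

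The essential content is therefore just Cramer's rule plus a bookkeeping check of what \lq\lq no mixed signs\rq\rq\ means numerically. The main potential pitfall is ensuring that the denominator and the nonlinear-control prefactors really are nonzero along the entire steady state curve; this relies on applying (A0) to the full input-output Jacobian $\frac{\partial \vv F}{\partial \vv x}$ rather than only to $\frac{\partial \vv f}{\partial \vv x}$, and on \Cref{lem:activation-inhibition-ODE-elimin} to exclude the boundary values of $x_1^*$. Once those are settled, the statement follows in a few lines.
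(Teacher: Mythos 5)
Your proposal is correct and follows essentially the same route as the paper's proof: Cramer's rule from \Cref{lem:Cramer}, nonvanishing of the denominator via (A0) applied to $\frac{\partial \vv F}{\partial \vv x}$, positivity of the prefactors $x_T-x_1^*$ and $x_1^*$ via \Cref{lem:activation-inhibition-ODE-elimin}, and the observation that a nonzero sign-definite symbolic minor evaluates to a nonzero number of fixed sign. The only difference is presentational — you spell out the monomial-by-monomial evaluation that the paper compresses into one sentence — so nothing further is needed.
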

\begin{proof}
    Under assumption (A0),  $\partial_u x^*_j$ is nonsingular, so the denominator in the expression of $\partial_u x^*_j$ in \eqref{eq:lem-Cramer} cannot be zero or change sign. Thus, quasi-adaptation occurs if and only if the numerator vanishes. Similarly, biphasic response requires the numerator to change sign, which by continuity of $\partial_u x^*_j$, necessarily means vanishing. 
    Thus, quasi-adaptation and biphasic response both depend on the minor $\mm J^*[\hat 1, \hat j]$ of the Jacobian matrix $\frac{\partial \vv f}{\partial \vv x}$. Because of the assumption on $\mbc J[\hat 1,\hat j]$, $\mm J^*[\hat 1, \hat j]$ is either always positive or always negative.
\end{proof}

We proved more than what we claimed in \Cref{lem:alg}:  if $\mbc J[\hat 1, \hat j]$ is not identically zero and has no mixed signs, then as a function of $\vv x$, $\frac{\partial \vv f}{\partial \xx}[\hat 1, \hat j]$ can never vanish.

\begin{thm}
\label{thm:graph-QA-i=j}
    Consider the system  \eqref{eq:syst_main} under assumption (A0) with output $x_1$, and let $G$ be its J-graph. 
    If $G$ has no \thmPFBL\ disjoint from the node $x_1$, the system cannot be quasi-adaptive, nor can it  exhibit biphasic response. 
\end{thm}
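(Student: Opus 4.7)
The plan is to combine \Cref{lem:alg} with \Cref{lem:symbolic-principal_minor}. Since the output index coincides with the input index ($j=1$), the relevant minor in \Cref{lem:alg} is actually the \emph{principal} minor $\mbc J[\hat 1, \hat 1] = \mbc J[\hat 1]$. So the task reduces to verifying the two hypotheses of \Cref{lem:alg}: (i) that $\mbc J[\hat 1] \not\equiv 0$, and (ii) that $\mbc J[\hat 1]$ has no mixed signs. Under these, \Cref{lem:alg} immediately gives both non-quasi-adaptation and non-biphasic response.

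For (i), I would invoke the Leibniz expansion of $\mbc J[\hat 1]$: the identity permutation contributes the monomial $\prod_{i=2}^n [\mbc J]_{ii}$, which is nonzero as a polynomial in the symbolic variables since every diagonal entry of $\mbc J$ is a (nonzero) negative symbol by the standing hypothesis on $\mbc J$. Thus $\mbc J[\hat 1]$ is not the zero polynomial.

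For (ii), I would apply \Cref{lem:symbolic-principal_minor} with $\gamma = \{2, 3, \ldots, n\}$: the principal minor $\mbc J[\hat 1]$ has mixed signs if and only if the subgraph $G(\hat 1)$ — the J-graph of $\frac{\partial \vv f}{\partial \vv x}$ restricted to nodes $\{2,\ldots,n\}$ — contains a PFBL. The only bookkeeping step is the translation between the J-graph of the input-output system (which carries input, output, and possibly activation/inhibition decorations) and the J-graph of the internal Jacobian used in \Cref{lem:symbolic-principal_minor}. Because feedback loops involve only internal state nodes, a PFBL in $G(\hat 1)$ in the symbolic-matrix sense is exactly a PFBL of the full J-graph $G$ that is vertex-disjoint from node $x_1$. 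By hypothesis no such PFBL exists, so $\mbc J[\hat 1]$ has no mixed signs.

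I do not expect a significant obstacle here: the result is essentially a direct specialization of the two supporting lemmas to the case $j=1$. The only subtle point is the observation just made — that PFBLs in $G$ disjoint from $x_1$ correspond bijectively to PFBLs in $G(\hat 1)$ in the sense of \Cref{lem:symbolic-principal_minor} — which justifies reading the hypothesis of the theorem as the graph-theoretic translation of condition (ii) above.
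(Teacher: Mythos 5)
Your proposal is correct and follows essentially the same route as the paper: reduce to \Cref{lem:alg}, apply \Cref{lem:symbolic-principal_minor} to the principal minor $\mbc J[\hat 1]$, and translate between the J-graph of the input-output system and that of $\frac{\partial \vv f}{\partial \vv x}$ (the paper cites \Cref{lem:activation-inhibition-Jgraph} for this last step). Your explicit justification that $\mbc J[\hat 1]\not\equiv 0$ via the diagonal term of the Leibniz expansion is a point the paper's proof leaves implicit, so if anything your write-up is slightly more complete.
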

\begin{proof}
    Let $\mbc J$ be the signed symbolic matrix consistent with  $\frac{\partial \vv f}{\partial \vv x}$.  By \Cref{lem:alg}, it suffices to show that $\mbc J[\hat 1, \hat 1] \not\equiv 0$ and it has no mixed signs. 
    Let $\tilde{G}$ be the J-graph of $\frac{\partial \vv f}{\partial \vv x}$, and thus the J-graph of $\mbc J$. 
    By \Cref{lem:symbolic-principal_minor}, $\mbc J[\hat 1, \hat 1]$ has no mixed if and only if $\tilde G(\hat 1)$ has no PFBL. This is equivalent to $G$ having no PFBL that is disjoint from the input/output node $x_1$, since $\tilde G$ is $G$ without the input/output edges.
\end{proof}

\pfthmspacing
\begin{thm}
\label{thm:graph-QA-i!=j}
    Consider the system \eqref{eq:syst_main} under assumption (A0) with output $x_j$ with $j\neq 1$, and let $G$ be its J-graph. 
    \begin{enumerate}
    \item If $G$ has no \thmIOpath, then the steady state response $x_j^*(u)$ is independent of  $u$. 
    \item Suppose $G$ has an \thmIOpath. If $G$ neither has an \thmIFFL\ from input to output, nor a \thmPFBL\ that is vertex-disjoint from some \thmIOpath, then the system cannot be quasi-adaptive, nor can it exhibit biphasic response. 
    \end{enumerate}
\end{thm}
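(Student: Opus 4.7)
The plan is to mirror the structure of the proof of \Cref{thm:graph-QA-i=j}, which handled the case $j=1$, but now using \Cref{lem:symbolic-non-principal_minor} (the non-principal minor result) in place of \Cref{lem:symbolic-principal_minor}. As before, let $\mbc J$ be the signed symbolic matrix consistent with $\frac{\partial \vv f}{\partial \vv x}$, and let $\tilde G$ be its J-graph. By \Cref{lem:activation-inhibition-Jgraph}, $\tilde G$ is obtained from $G$ by deleting the input and output edges (and, in the activation/inhibition cases, the external node $\cf{C}$ and the decorated edge $(\cf{C},x_1)$). The key observation is that I/O paths, IFFLs from input to output, and PFBLs disjoint from an I/O path in $G$ are in one-to-one correspondence with paths from node $1$ to node $j$, IFFLs from $1$ to $j$, and PFBLs disjoint from a $1$-to-$j$ path in $\tilde G$, respectively, because none of these structures can use the input or output edges.

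For part 1, suppose $G$ has no \thmIOpath. Then $\tilde G$ has no path from node $1$ to node $j$, so by \Cref{lem:symbolic-non-principal_minor}(1), the polynomial $\mbc J[\hat 1, \hat j]$ is identically zero. Consequently $\mm J^*[\hat 1, \hat j] = 0$ along any steady-state curve $\vv x^*(u)$. Plugging this into the three Cramer formulas of \Cref{lem:Cramer}, the numerator vanishes identically in all three control scenarios, giving $\partial_u x_j^*(u) \equiv 0$, hence $x_j^*(u)$ is independent of $u$.

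For part 2, assume $G$ has an \thmIOpath\ but neither an \thmIFFL\ from input to output, nor a \thmPFBL\ vertex-disjoint from some \thmIOpath. Translating these hypotheses to $\tilde G$ via \Cref{lem:activation-inhibition-Jgraph}, \Cref{lem:symbolic-non-principal_minor}(2) implies that $\mbc J[\hat 1, \hat j]$ is not identically zero and has no mixed signs. Now \Cref{lem:alg} applies directly and yields that the system is neither quasi-adaptive nor biphasic.

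The main subtlety, and essentially the only nontrivial point, is verifying that the graph-theoretic hypotheses in $G$ correctly correspond to the hypotheses on $\tilde G$ required by \Cref{lem:symbolic-non-principal_minor}. The input edge is incoming to node $1$ only, and the output edge is outgoing from node $j$ only, so any I/O path, any IFFL from $1$ to $j$, and any FBL in $G$ is already contained in $\tilde G$; in the activation/inhibition cases, the node $\cf{C}$ is a source (no internal edges into it) and so it cannot lie on any feedback loop or on an internal path from node $1$ to node $j$ either. This justifies the translation and completes the reduction to the algebraic results of \Cref{sec:mixed-signs}.
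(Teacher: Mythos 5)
Your proposal is correct and follows essentially the same route as the paper's own proof: reduce to $\tilde G$ (the J-graph of $\mbc J$ obtained by stripping the input/output edges), invoke \Cref{lem:symbolic-non-principal_minor} for both the identically-zero case and the mixed-signs case, and conclude via \Cref{lem:Cramer} and \Cref{lem:alg}. The extra care you take in justifying the correspondence of I/O paths, IFFLs, and PFBLs between $G$ and $\tilde G$ (including that $\cf{C}$ is a source node) is a welcome elaboration of a step the paper states more tersely.
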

\begin{proof}
    Without loss of generality, let $x_n$ be the output. Let $\mbc J$ be the signed symbolic matrix consistent with $\mm J^* \coloneqq \frac{\partial \vv f}{\partial \xx}(\vv x^*(u))$, and let $\tilde G$ be the J-graph of $\mbc J$, obtained from $G$ by deleting the input and output edges. If $G$ has no I/O path, neither does $\tilde G$, so $\mbc J[\hat 1, \hat n] \equiv 0$ by \Cref{lem:symbolic-non-principal_minor}, hence $\mm J^*[\hat 1, \hat n] = 0$ for all $u$. Claim 1 follows from \Cref{lem:Cramer}. 

    Suppose $G$ has an I/O path, so $\mbc J[\hat 1, \hat n] \not\equiv 0$. By \Cref{lem:alg}, it suffices to show that $\mbc J[\hat 1, \hat n]$ has no mixed signs. The subgraphs listed in  2 are in $G$ if and only if they are in $\tilde G$, so by \Cref{lem:symbolic-non-principal_minor}, $\mbc J[\hat 1, \hat n]$ has no mixed signs. 
\end{proof}

\pfthmspacing
\begin{rmk}
\label{rmk:remarks-classical_examples}
    The classical adapting integral feedback and IFFL linear systems (see for example \cite[Section 6.1]{ShovalAlonSontag2011}) do not fall under our results. For the integral feedback system, whose J-graph is shown in \Cref{fig:remarks-classical_examples-integralFBL}, there is no self-loop on $x$. The IFFL linear system, whose J-graph is in \Cref{fig:remarks-classical_examples-IFFL}, is a system where $u$ affects more than one variable. 
\end{rmk}

\begin{figure}[t!]
\centering 
\begin{subfigure}[t]{0.22\textwidth}
    \centering
    \vspace{1.5cm}
    \begin{tikzpicture}[xscale=1, yscale=0.75, transform canvas={xshift=-0.6cm, yshift=0.5cm}]
    \node at (-0.75,0) {}; 
            \node[Jnode] (1) at (0,0) {$x$};  
            \node[Jnode] (2) at (1.25,0) {$y$}; 
            \draw[negEdge, EdgeShift={0pt}{2.75pt}] (1)--(2);
            \draw[posEdge, EdgeShift={0pt}{-2.75pt}] (2)--(1);
            \draw[negEdge] (2.north east) to[out=50, in=0, looseness=6] (2.east);

            \draw[posEdge] (1.25,1) -- (2); 
            \draw[posEdge] (2)--(1.25,-1);
    \end{tikzpicture}
    \caption{}
    \label{fig:remarks-classical_examples-integralFBL}
\end{subfigure}
\begin{subfigure}[t]{0.22\textwidth}
    \centering
    \vspace{1.5cm}
    \begin{tikzpicture}[xscale=1, yscale=0.75, transform canvas={xshift=-0.6cm,yshift=0.5cm}]
            \node[Jnode] (1) at (0,0) {$x$};  
            \node[Jnode] (2) at (1.25,0) {$y$}; 
            \draw[negEdge] (1)--(2);
            \draw[negEdge] (2.north east) to[out=50, in=0, looseness=6] (2.east);
            \draw[negEdge] (1.west) to[out=180, in=230, looseness=6] (1.south west);

            \draw[posEdge] (0,1) -- (1); 
            \draw[posEdge] (1.25,1)--(2);
            \draw[posEdge] (2)--(1.25,-1);
    \end{tikzpicture}
    \caption{}
    \label{fig:remarks-classical_examples-IFFL}
\end{subfigure}
\caption{
    Our theorems are silent on the (a) adapting integral feedback and (b) IFFL linear systems.
}
\label{fig:remarks-classical_examples}
\vspace{-0.5cm}
\end{figure}
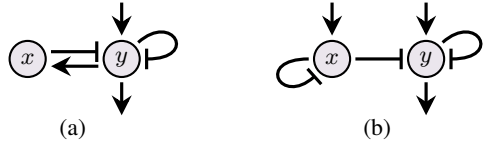
\pfthmspacing

We present counterexamples to the claim in \cite{LeverLimKrugerNguyenEtAl2016} that an IFFL is necessary for stable biphasic response.

\begin{ex}
\label{ex:biorxiv2}
    This example first appeared in \cite{Sontag2020_biphasic_biorxiv}. Consider the  system on $\rrp^4$ with control $u \in [0.4,0.6]$ and output $x_3$ (and a different system with output $x_4$):%
    \begin{align}
    \begin{split}
    \label{eq:ex:biorxiv2}
        \dot x_1 &= -x_1 + f(x_2) \\ 
        \dot x_2 &= h(x_1) - x_2 + x_3 \\ 
        \dot x_3 &= h(x_2) -2x_3 + u\\ 
        \dot x_4 &= x_3 - x_4 , 
    \end{split}
\intertext{where $f(s) = e^{1-\sigma(s)}$, $h(s) = \frac{1}{2}e^{2(1-\sigma(s))}$, and   } 
        \sigma(s) &= \left\{\begin{array}{cl}%
        {0.8}&{0\leq s<0.8} \\ %
        {s  }&{0.8\leq s\leq 1.2} \\%
        {1.2}&{1.2\leq s}
        \end{array}\right. . \nonumber 
    \end{align}%
Both $x_3$ and $x_4$ exhibit stable biphasic response curves with $x_3^*(u) = x_4^*(u)$, as shown in \Cref{fig:biorxiv2-ss_curve}. Moreover, its J-graph (\Cref{fig:biorxiv2-Jgraph}) has no IFFL. Whether the output is $x_3$ or $x_4$, the PFBL between $x_1$ and $x_2$ and the NFBL between $x_2$ and $x_3$ are necessary for stable biphasic response. (Full detail can be found in \citeVersion{\Cref{sec:app-ex:bioarxiv2}}.)
\end{ex}

\begin{figure}[ht]
\vspace{-0.3cm}
\centering 
\begin{subfigure}[t]{0.23\textwidth}
    \centering
    \begin{tikzpicture}[xscale=1, yscale=0.8, transform canvas={xshift=-0.6cm, yshift=2.55cm}]
    \node at (-1.25,-3.15) {}; 
            \node[Jnode] (1) at (1.25,-2.5) {$x_1$};  
            \node[Jnode] (2) at (0,-2.5) {$x_2$}; 
            \node[Jnode] (3) at (0,-1.25) {$x_3$}; 
            \node[Jnode] (4) at (1.25,-1.25) {$x_4$}; 
            \draw[negEdge] (4.north east) to[out=50, in=0, looseness=6] (4.east);
            \draw[negEdge] (1.north east) to[out=50, in=0, looseness=6] (1.east);
            \draw[negEdge] (3.north west) to[out=130, in=180, looseness=6] (3.west);
            \draw[negEdge] (2.north west) to[out=130, in=180, looseness=6] (2.west);
            \draw[posEdge, EdgeShift={-2.75pt}{0pt}] (0,-0.25) -- (3);  
            \draw[posEdge] (3) -- (4); 
            \draw[negEdge, EdgeShift={0pt}{-2.75pt}] (1)--(2);
            \draw[negEdge, EdgeShift={0pt}{2.75pt}] (2)--(1);
            \draw[posEdge, EdgeShift={-2.75pt}{0pt}] (3)--(2);
            \draw[negEdge, EdgeShift={2.75pt}{0pt}] (2)--(3);
            
            \draw[posEdge, dashed, vblue!30!blue] (4) -- (1.25,-0.25); 
            \draw[posEdge, EdgeShift={2.75pt}{0pt}] (3) -- (0,-0.25); 
    \end{tikzpicture}
    \caption{}
    \label{fig:biorxiv2-Jgraph}
\end{subfigure}
\begin{subfigure}[t]{0.23\textwidth}
    \centering
    \includegraphics[height=1.in]{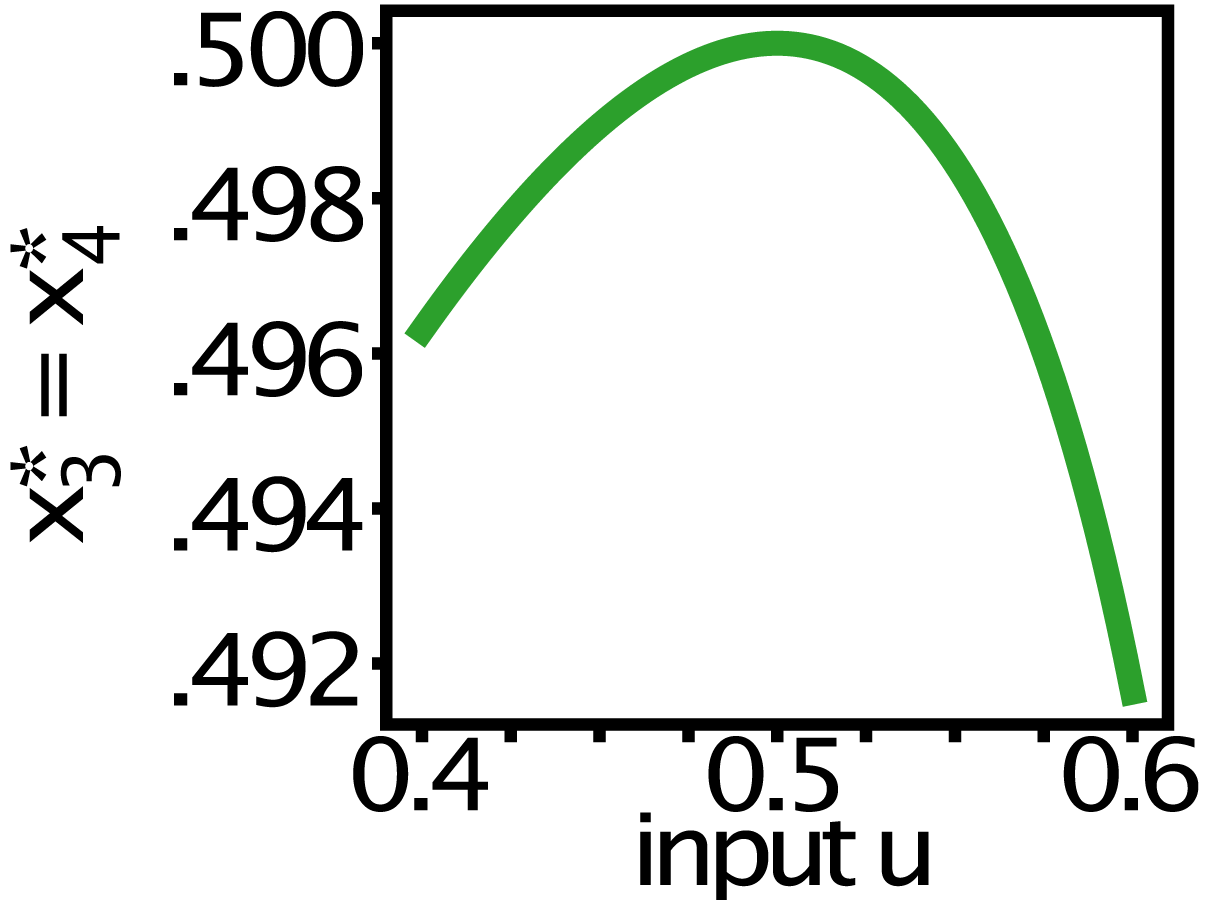} 
    \caption{}
    \label{fig:biorxiv2-ss_curve}
\end{subfigure}
\caption{
    (a) The J-graph of the system from \Cref{ex:biorxiv2}, with $x_3$ being the output of one, and $x_4$ the other (blue, dashed). 
    (b) The system exhibits stable biphasic response in $x_3$ and $x_4$. The \thmPFBL\ between $x_1$ and $x_2$ and \thmNFBL\ between  $x_2$ and $x_3$ are necessary for the nonmonotonic response. 
}
\label{fig:biorxiv2}
\end{figure}
\pfthmspacing

\begin{ex}
\label{ex:biphasic_unstable}
    This example illustrates that \Cref{thm:graph-QA-i=j,thm:graph-QA-i!=j} do not assume stability. Consider the following system on $\rrp^4$: 
    \begin{align}
    \begin{split}
    \label{eq:ex:biphasic_unstable}
        \dot x_1 &= u - 2 x_1      \\
        \dot x_2 &=   x_1 + x_3 -2 x_2^2  + {x_3^2}/{4}   \\
        \dot x_3 &=   
            -{7x_3}/{4}  + x_2^2    \\
        \dot x_4 &=  x_1    + {3x_3}/{4}   - x_4, 
    \end{split}
    \end{align}
    with control $u \in [0,12.5]$ and output  $x_4$. Its J-graph (\Cref{fig:biphasic_unstable-Jgraph}) does not have an IFFL. The system has two branches of steady states (\Cref{fig:biphasic_unstable-ss_curves}), one stable and the other unstable, which is not monotonic. Without the PFBL between $x_2$ and $x_3$, this system cannot exhibit biphasic response. (Full detail can be found in \citeVersion{\Cref{sec:app-ex:biphasic_unstable}}.)
\end{ex}

\begin{figure}[ht]
\vspace{-0.3cm}
\centering 
\begin{subfigure}[t]{0.18\textwidth}
    \centering
    \begin{tikzpicture}[xscale=1, yscale=0.8, transform canvas={xshift=-0.63cm, yshift=1.55cm}]
    \node at (0,-2) {};
            \node[Jnode] (1) at (0,0) {$x_1$}; 
            \node[Jnode] (4) at (1.25,0) {$x_4$}; 
            \node[Jnode] (2) at (0,-1.25) {$x_2$}; 
            \node[Jnode] (3) at (1.25,-1.25) {$x_3$}; 
            \draw[negEdge] (4.north east) to[out=50, in=0, looseness=6] (4.east);
            \draw[negEdge] (3.north east) to[out=50, in=0, looseness=6] (3.east);
            \draw[negEdge] (1.north west) to[out=130, in=180, looseness=6] (1.west);
            \draw[negEdge] (2.north west) to[out=130, in=180, looseness=6] (2.west);
            \draw[posEdge] (0,1) -- (1);  
            \draw[posEdge] (4) -- (1.25,1);  
            \draw[posEdge] (1) -- (4);
            \draw[posEdge](1)--(2); 
            \draw[posEdge] (3)--(4);
            \draw[posEdge, EdgeShift={0pt}{2.75pt}] (2)--(3);
            \draw[posEdge, EdgeShift={0pt}{-2.75pt}] (3)--(2);
    \end{tikzpicture}
    \caption{}
    \label{fig:biphasic_unstable-Jgraph}
\end{subfigure}
\begin{subfigure}[t]{0.23\textwidth}
    \centering 
    \includegraphics[height=1.in]{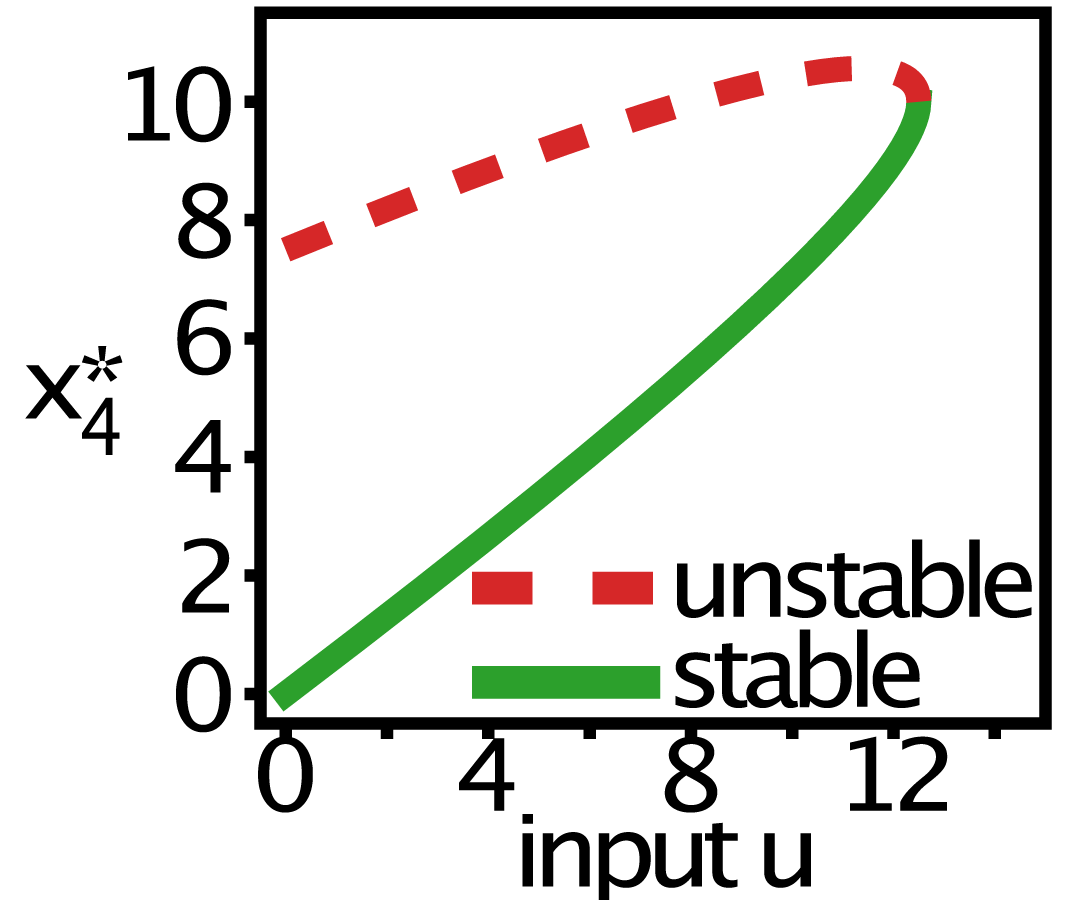} 
    \caption{}
    \label{fig:biphasic_unstable-ss_curves}
\end{subfigure}
\caption{
    (a) The J-graph of the system from \Cref{ex:biphasic_unstable}. 
    (b) The system has two branches of steady states: a stable monotontic and an unstable nonmonotonic branch. The \thmPFBL\ between $x_2$ and $x_3$ is necessary for biphasic response. 
}
\label{fig:biphasic_unstable}
\end{figure}
\pfthmspacing

\subsection{Conditions for stable biphasic response}

The theory of monotone systems predicts that stable branches of steady states will be nondecreasing or nonincreasing depending on the net sign of the \lq\lq input-output characteristic\rq\rq\ of the system, while unstable branches may be decreasing or increasing, irrespective of the sign of the characteristic. These facts are well illustrated by the example of a genetic autoregulatory transcription network analyzed in~\cite{enciso_sontag_2008}. In Fig.~3a of that paper, one can see four stable branches (drawn in solid blue lines) as well as four unstable branches of equilibria (drawn in dotted red lines); of the four unstable branches, three are decreasing functions of $u$.

To obtain necessary conditions for \emph{stable} biphasic responses, we use Theorem~1 from \cite{AscensaoDattaHanciogluSontagEtAl2016}, which concerns a system $\dot{\vv x} = \vv F(\vv x, u(t))$, initially at steady state, with piecewise continuous input $u(t)$ that is monotonic in time. It states that if all walks (with no self-loops) from the input node $u$ to the output node $x_j$ have the same sign\footnote{%
    In \cite{AscensaoDattaHanciogluSontagEtAl2016}, the control $u$ is given its own node in the graph. In our context where $u$ only affects $x_1$, this is equivalent to adding a node for $u$ and an edge $(u, x_1)$ with sign  $\sgn(\partial_u g)$. Thus $x_j(t)$ is monotonic if all walks (with no self-loops) from $x_1$ to $x_j$ have the same sign.
    }%
, then $x_j(t)$ is monotonic in time. Moreover, suppose $u(t)$ is nondecreasing in time and $\partial_u g(u(t)) > 0$, then  $x_j(t)$ is nondecreasing (resp. nonincreasing) if any I/O path $P$ is positive (resp. negative). 
More precisely, at any $t \geq 0$, the sign of $\partial_t x_j(t)$ is either $0$ or given by $\sgn(\partial_t u) \sgn(\partial_u g) \sgn(P)$.

\begin{thm}
\label{thm:stable-biphasic}
    Consider the  system  \eqref{eq:syst_main} under assumption (A1), and let $G$ be its J-graph. Let $\xx^*(u)$ be a stable branch of steady states. 
    \begin{enumerate}
        \item Suppose $x_1$ is the output. If there is no \thmNFBL\ in $G$ that is reachable to and from the node $x_1$, then $x_1^*(u)$ is monotonic. 
    
        \item Suppose $x_j$ is the output with $j \neq 1$. If $G$ neither has an \thmIFFL\ from $x_1$ to $x_j$, nor a \thmNFBL\ that is reachable from $x_1$ and to $x_j$, then $x_j^*(u)$ is monotonic. 
    \end{enumerate}
    Under the assumption stated above, the output cannot exhibit stable biphasic response. 
\end{thm}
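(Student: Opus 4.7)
The plan is to reduce the theorem to the input-to-state monotonicity result of \cite{AscensaoDattaHanciogluSontagEtAl2016} cited in the excerpt, which says that if every walk (without self-loops) from the input node to the output node has the same sign, then the output is monotonic in time whenever the input is monotonic in time. I would apply this to monotonic step inputs $u(t)$ between two nearby values $u_0, u_1 \in J$, start the trajectory at the equilibrium $\vv x^*(u_0)$, and use assumption (A1) to conclude that the trajectory converges to $\vv x^*(u_1)$. Since $x_j(t)$ is then monotonic in time with endpoints $x_j^*(u_0)$ and $x_j^*(u_1)$, the steady state value is ordered. Repeating this at every $u_0 \in J$ and using the $C^1$ smoothness of $\vv x^*(u)$ from (A0), $\partial_u x_j^*$ cannot change sign on $J$, so $x_j^*(u)$ is monotonic and stable biphasic response is ruled out.

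The core combinatorial step is to check that the graph-theoretic hypotheses of \Cref{thm:stable-biphasic} imply that all walks (without self-loops) from the control node to the output node have a common sign. The control node is connected to $x_1$ by a single input edge of fixed sign $\epsilon$ (positive for flow and activation, negative for inhibition), so it suffices to show that every walk from $x_1$ to the output has a common sign. The plan here is to decompose an arbitrary walk $W$ into its underlying simple path $P$ (empty in case 1, or a simple $x_1$-to-$x_j$ path in case 2) plus a collection of attached simple cycles $C_1, \ldots, C_m$, giving $\sgn(W)=\sgn(P)\prod_i \sgn(C_i)$ once self-loops are discarded. Any vertex of any $C_i$ lies on $W$, so it is reachable from $x_1$ (via the prefix of $W$) and reachable to the output (via the suffix).

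In case 1 the path is empty, so the sign condition reduces to: every FBL reachable to and from $x_1$ is a PFBL, which is precisely the hypothesis. In case 2 the sign condition reduces to: (i) every FBL reachable from $x_1$ and to $x_j$ is positive, i.e. no NFBL reachable from input and to output; and (ii) every simple $x_1$-to-$x_j$ path carries the same sign, i.e. no IFFL from input to output. Both conditions are exactly the stated hypotheses, so $\sgn(W)$ is constant and the monotonicity theorem applies.

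The main obstacle I anticipate is not the combinatorics, which is a clean walk-decomposition argument, but rather carefully marrying the dynamic monotonicity conclusion with the algebraic/geometric statement about the curve $x_j^*(u)$. In particular, (A1) only guarantees convergence for $u$ in a neighborhood of each $u_0$, so one must argue monotonicity locally and then patch: local monotonicity at every point of $J$, combined with $C^1$ regularity of $\vv x^*$, forces the sign of $\partial_u x_j^*$ to be locally constant, hence globally of one sign on the connected interval $J$. A secondary subtlety is bookkeeping of the input edge in the three control modalities and ensuring the self-loop exclusion in \cite{AscensaoDattaHanciogluSontagEtAl2016} is respected; both are addressed uniformly by the factorization $\sgn(W)=\epsilon\,\sgn(W')$ and by noting that diagonal self-loops on $\frac{\partial\vv f}{\partial\vv x}$ are precisely the excluded edges.
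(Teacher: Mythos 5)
Your proposal is correct and follows essentially the same route as the paper: both reduce to the walk-sign monotonicity theorem of \cite{AscensaoDattaHanciogluSontagEtAl2016}, verify via a walk-into-path-plus-cycles decomposition that the graph hypotheses force all input-to-output walks to share a sign, and then convert time-monotonicity under a small step input (with convergence from (A1) and $C^1$ regularity of $\vv x^*$) into monotonicity of the steady-state curve. The only cosmetic difference is that the paper phrases the combinatorial step in terms of ``appendages'' trimmed from a walk rather than your explicit factorization $\sgn(W)=\sgn(P)\prod_i\sgn(C_i)$; the content is identical.
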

\begin{proof}
    We first show that the conditions listed in the theorem imply that any walk (with no self-loops) from $x_1$ to $x_j$ has the same sign.  Then by \cite[Theorem 1]{AscensaoDattaHanciogluSontagEtAl2016}, the solution $x_j(t)$ is monotonic in time.  For the rest of this proof, by a walk, we mean a walk without traversing any self-loops.

    (1)  There are two oppositely signed walks from $x_1$ to $x_1$ if and only if there is a NFBL $C$ in the strongly connected component of $x_1$. If $P_1$ is a path from $x_1$ to node $x_i$ in $C$ and $P_2$ is a path from $x_i$ to $x_2$, then $P_1 \cup P_2$ and $P_1 \cup C \cup P_2$ are two oppositely signed walks from $x_1$ to $x_1$.

    (2) If $G$ does not have an I/O path, then $x_j^*$ cannot be biphasic by \Cref{thm:graph-QA-i!=j}, so suppose otherwise. Clearly an IFFL from $x_1$ to $x_j$ constitutes two oppositely signed walks. A walk from $x_1$ to $x_j$ that traverses a NFBL generates a different walk than with the NFBL omitted; these are two walks with opposite signs from $x_1$ to $x_j$.
    
    Therefore, under the hypotheses listed in 1 or 2, $x_j(t)$ is nondecreasing (resp. nonincreasing) if $\partial_u g > 0$ and $u(t)$ is non-decreasing\footnote{If $\partial_t u \leq 0$ (or alternatively $\partial_u g < 0$), then the sign of $\dot x_j(t)$ is flipped.}, 
    and any walk from $x_1$ to $x_j$ is positive (resp. negative) by \cite[Theorem 1]{AscensaoDattaHanciogluSontagEtAl2016}. In what follows, we assume that any walk from $u$ to $x_j$ is positive; the other cases only require appropriate changes in signs and inequalities.

    Finally, we show that $x_j(t)$ being monotonic in $t$ implies  $x_j^*(u)$ is monotonic in $u$. 
    The system is $\dot{\vv x} = \vv f(\vv x) + \ehat_1 {g(u(t))}$, where $u(t) = u_0 + \epsilon$ for a fixed $\epsilon > 0$ such that $u_0 + \epsilon \in J$. 
    Suppose the system is initially at $\xx^*(u_0)$; let $\vv x(t)$ be a solution, defined for all $t \geq 0$. Since $x_j(t)$ is monotonically nondecreasing,  $x_j^*(u_0+\epsilon) = \lim_{t\to \infty} x_j(t) \geq x_j^*(u_0)$, where the convergence to $x_j^*(u_0+\epsilon)$ is guaranteed by assumption (A1). So the $C^1$ steady state curve must have nondecreasing derivative at $u_0$. Since $u_0$ and $\epsilon$ are arbitrary, we conclude that $\partial_u x_{j}^*(u) \geq 0$ for all $u \in J$.  
\end{proof}

The implication diagram in the introduction is a corollary of \Cref{thm:graph-QA-i=j,thm:graph-QA-i!=j,thm:stable-biphasic}.

\begin{cor}
\label{cor:summary} 
    Consider the  system \eqref{eq:syst_main} under assumption (A1). Necessary conditions for stable biphasic response are:
    \begin{itemize}
        \item when $x_1$ is output, the existence of a \thmPFBL\ disjoint from the node $x_1$ and a \thmNFBL\ that is reachable to and from the node $x_1$; 
        \item when $x_j$ is output with $j \neq 1$, either (a) the existence of an \thmIFFL\ from $x_1$ to $x_j$, or (b) the existence of a \thmPFBL\ vertex-disjoint from some \thmIOpath\ and a \thmNFBL\ that is reachable from $x_1$ and to $x_j$. 
    \end{itemize}
\end{cor}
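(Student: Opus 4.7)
The plan is to obtain the corollary as a purely logical combination of the necessary conditions already extracted in \Cref{thm:graph-QA-i=j}, \Cref{thm:graph-QA-i!=j}, and \Cref{thm:stable-biphasic}, noting that stable biphasic response is in particular biphasic, so every necessary condition from the three theorems must hold. The core observation driving the combination is the Boolean identity $(A \lor B) \land (A \lor C) \equiv A \lor (B \land C)$, which will let me fuse the two independent necessary conditions obtained when $j \neq 1$.

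First I would handle the case where the output is $x_1$. Assume stable biphasic response. By \Cref{thm:graph-QA-i=j}, the contrapositive forces the existence of a \thmPFBL\ in $G$ disjoint from the node $x_1$. Since stability implies (A1) holds, \Cref{thm:stable-biphasic}(1) applies, and its contrapositive forces the existence of a \thmNFBL\ reachable both to and from $x_1$. Taking the conjunction yields exactly the first bullet.

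Next I would handle the case $x_j$ with $j\neq 1$. If $G$ has no \thmIOpath, then \Cref{thm:graph-QA-i!=j}(1) tells us $x_j^*(u)$ is constant in $u$ and so cannot be biphasic; hence we may assume an \thmIOpath\ exists. The contrapositive of \Cref{thm:graph-QA-i!=j}(2) then yields condition $(A) \lor (B)$ where $A$ is ``$G$ has an \thmIFFL\ from $x_1$ to $x_j$'' and $B$ is ``$G$ has a \thmPFBL\ vertex-disjoint from some \thmIOpath.'' The contrapositive of \Cref{thm:stable-biphasic}(2) yields $(A) \lor (C)$ where $C$ is ``$G$ has a \thmNFBL\ reachable from $x_1$ and to $x_j$.'' Applying $(A \lor B) \land (A \lor C) \equiv A \lor (B \land C)$ delivers the second bullet verbatim.

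There is essentially no analytical obstacle: the work was already done in the three theorems above, and what remains is a two-line bookkeeping argument. The only mildly delicate point is remembering to dispose of the degenerate subcase in which no \thmIOpath\ exists before invoking part (2) of \Cref{thm:graph-QA-i!=j}, since that part is phrased conditionally on having an \thmIOpath; this is handled by observing that the absence of such a path already precludes biphasic response via part (1) of the same theorem.
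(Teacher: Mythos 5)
Your proposal is correct and matches the paper's intended argument: the paper states this result as an immediate corollary of \Cref{thm:graph-QA-i=j,thm:graph-QA-i!=j,thm:stable-biphasic}, and your combination of their contrapositives (including the distributive identity to merge the two disjunctions in the $j\neq 1$ case, and the disposal of the no-\thmIOpath\ subcase via part 1 of \Cref{thm:graph-QA-i!=j}) is exactly the bookkeeping the paper leaves implicit.
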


\section{Alternative forms of control}
\label{sec:activation-inhibition}

The examples presented thus far involve the simplest form of control: $g(u) = u$. 
In the context of biochemistry, such a term with $u \geq 0$ might represent inflow or production of the species modelled by $x_1$. However, other forms of controls occur naturally. For example, suppose $x_1$ represents the concentration of a protein $\cf{X}_1$ in its \emph{active} form, and its inactive form could be enzymatically activated by another species, e.g., see \cite{LeverLimKrugerNguyenEtAl2016,RameshSuwanmajoKrishnan2023}. Then the control would take the form $g(u,x_1) =  u(T_1 - x_1)$, where $u$ is proportional to the concentration of this other species, and $T_1$ is the total amount of $\cf{X}_1$ in the system. Throughout this work, we assume the activation/inhibition reactions follow mass-action kinetics.

\begin{defn}
\label{def:activation-inhibition}
    We say $\cf{C}$ \df{activates} (resp. \df{inhibits}) $\cf{X}$ to refer to the reactions $\cf{Z} \rightleftharpoons \cf{X}$ and $\cf{C} + \cf{Z} \to \cf{C} + \cf{X}$ (resp. $\cf{C} + \cf{X} \to \cf{C} + \cf{Z}$), along with the conservation law $z + x = T$, where $\cf{Z}$ is the inactive form of $\cf{X}$.
\end{defn}

We now consider a biochemical system involving species $\cf{X}_1,\ldots, \cf{X}_n$. We assume each $\cf{X}_i$ has an active form (with concentration $x_i$) and an inactive form (with concentration $z_i$), and that $\cf{X}_1$ is activated (or inhibited) by an external species $\cf{C}$. The other reactions allowed are $\cf{X}_i$ being activated or inhibited by other $\cf{X}_j$'s in their active forms. For any $i \in [n]$, let
\eq{ 
    \Lambda_\text{act}^i &\coloneqq \left\{ j \in [n] \colon \cf{X}_j \text{ activates } \cf{X}_i ,\,  j \neq i \right\} , 
    \\ 
    \Lambda_\text{inh}^i &\coloneqq \left\{ j \in [n] \colon \cf{X}_j \text{ inhibits } \cf{X}_i ,\,  j \neq i \right\} , 
}
where $\Lambda_\text{act}^i \cap \Lambda_\text{inh}^i = \emptyset$. We are interested in the case when $\bigcup_{i=1}^n \left( \Lambda_\text{act}^i \cup \Lambda_\text{inh}^i \right) \neq \emptyset$. 
We called such a system with only activation and/or inhibition reactions an \df{activation-inhibition network}. 
See \citeVersion{\Cref{sec:app-internal-activation-inhibition}} for a concrete example.

Let $k_c > 0$ be a catalytic rate constant; if $\cf{X}_1$ is activated by $\cf{C}$, let $\tilde{g}(C,x_1, z_1) = k_c C z_1$, or in the case of inhibition, $\tilde{g}(C,x_1,z_1) = - k_c C x_1$.  
Under mass-action kinetics, the dynamics of an activation-inhibition network where $\cf{X}_1$ is activated (or inhibited) by an external species $\cf{C}$ is given by 
\ifThisIsArxivVersion
    \eq{ 
        -\dot z_1 = \dot x_1 
        &=  k_\text{on}^1 z_1 - k_\text{off}^1 x_1 
            + \! \sum_{j \in \Lambda_\text{act}^1} \! k_\text{cat}^{1j} x_j z_1 
            - \! \sum_{j \in \Lambda_\text{inh}^1} \! k_\text{cat}^{1j} x_j x_1  + \tilde{g}(C, x_1, z_1)
        \\ 
        -\dot z_i = \dot x_i 
        &=  k_\text{on}^i z_i - k_\text{off}^i x_i 
            + \! \sum_{j \in \Lambda_\text{act}^i} \! k_\text{cat}^{ij} x_j z_i 
            - \! \sum_{j \in \Lambda_\text{inh}^i} \! k_\text{cat}^{ij} x_j x_i  ,
            \qquad \text{ for } i \neq 1, 
    }
    where 
\else 
    $-\dot z_i = \dot x_i 
        =  k_\text{on}^i z_i - k_\text{off}^i x_i 
            +  \sum_{j \in \Lambda_\text{act}^i} \! k_\text{cat}^{ij} x_j z_i 
            -  \sum_{j \in \Lambda_\text{inh}^i} \! k_\text{cat}^{ij} x_j x_i  
        + \delta_{i1} \tilde{g}(C, x_1, z_1)
    $, where $\delta_{ij}$ is the Kronecker delta and %
\fi 
$k_\text{on}^i, k_\text{off}^i$, and $k_\text{cat}^{ij} > 0$  are constants. 
For each $i \in [n]$, there is a conservation law $x_i + z_i = T_i$ for some constant $T_i > 0$. 
Hence, we can eliminate the variables $z_i$. The resulting system  $\dot{\vv x} = \vv f(\vv x) + \ehat_1 g(C,x_1)$ is given by  
\ifThisIsArxivVersion
\eqn{\begin{split}\label{eq2:activation-inhibition-general} 
    \dot x_1 &=  \left( k_\text{on}^1 + \!\sum_{j \in \Lambda_\text{act}^1} \! k_\text{cat}^{1j} x_j \right) (T_1 - x_1)  
        - \left( k_\text{off}^1 + \! \sum_{j \in \Lambda_\text{inh}^1} \! k_\text{cat}^{1j} x_j \right)x_1
        +  g(C, x_1)
         \\ 
    \dot x_i &= \left( k_\text{on}^i + \! \sum_{j \in \Lambda_\text{act}^i} \! k_\text{cat}^{ij} x_j\right) (T_i - x_i) 
        - \left( k_\text{off}^i + \! \sum_{j \in \Lambda_\text{inh}^i} \! k_\text{cat}^{ij} x_j \right)x_i,
        \qquad \text{ for } i \neq 1, 
\end{split}}
\else 
\eqn{\begin{split}\label{eq2:activation-inhibition-general} 
    \dot x_i &= \left( k_\text{on}^i + \! \sum_{j \in \Lambda_\text{act}^i} \! k_\text{cat}^{ij} x_j\right) (T_i - x_i) 
        \\&\quad 
        - \left( k_\text{off}^i + \! \sum_{j \in \Lambda_\text{inh}^i} \! k_\text{cat}^{ij} x_j \right)x_i 
    + \delta_{i1}  g(C, x_1)  , 
\end{split}}
\fi%
where $g(C, x_1) = k_c C(T_1 - x_1)$ in the case of activation, and $g(C,x_1) = -k_c Cx_1$ in the case of inhibition. 

We make two observations about \eqref{eq2:activation-inhibition-general}. First, $0 \leq x_i \leq T_i$ for all $i$, because $\rrp^{2n}$ is forward-invariant for $(\vv x, \vv z)$, and $x_i + z_i = T_i$. Second, the system has no boundary steady state (i.e., either $x_i = 0$ or $x_i = T_i$). This follows because if $x_i = 0$, then $\dot x_i > 0$; if $x_i = T_i$, then $\dot x_i < 0$. For mass-action systems, the strictly positive orthant is forward-invariant~\cite[Lemma II.1]{Sontag2001}; therefore, we assume $0 < z_i, x_i < T_i$ for all $i$. 

We can compute the Jacobian matrix $\mm J$ of $\vv f(\vv x)$ directly: 
\eq{ 
    \frac{\partial f_i}{\partial x_i} 
        &=  - k_\text{on}^i - k_\text{off}^i -  \!\!\!\!\! \sum_{j \in \Lambda_\text{act}^i \cup \Lambda_\text{inh}^i} \!\!\! k^{ij}_\text{cat} x_j, 
}\eq{
    \frac{\partial f_i}{\partial x_j} 
        &= \left\{ 
            \begin{array}{cl}
                k_\text{cat}^{ij} (T_i - x_i)   & \text{ if } j \in \Lambda_\text{act}^i, \\ 
                -k_\text{cat}^{ij} x_i  & \text{ if } j \in \Lambda_\text{inh}^i, \\ 
                0 & \text{ otherwise},
            \end{array}
        \right. 
        \qquad \text{ for } j \neq i.
}
The diagonal entries of $\mm J$ are negative. The off-diagonal $(i,j)$ entry is positive if $j \in \Lambda_\text{act}^i$ since $T_i - x_i > 0$, and negative if $j \in \Lambda_\text{inh}^i$; $[\mm J]_{ij} \equiv 0$ if $j \not\in \Lambda_\text{act}^i \cup \Lambda_\text{inh}^i$. In other words, $\mm J$ is a signed matrix with negative diagonals. Note that when $\cf{X}_1$ is activated by $\cf{C}$, $\frac{\partial g}{\partial C} = k_c(T_1 - x_1) > 0$, and $\frac{\partial g}{\partial C} = -k_c x_1 < 0$ in the case of inhibition. 

As a result of the signed structure of $\mm J$, its J-graph $G$ is well-defined. Moreover, we can define the J-graph of the  system \eqref{eq2:activation-inhibition-general} by adding an output edge from $x_j$ and an input edge to $x_1$, where the input edge is positive (resp. negative) if $\cf{X}_1$ is activated (resp. inhibited) by $\cf{C}$. The resulting graph is the familiar one in biology for activation/inhibition.

The system \eqref{eq2:activation-inhibition-general} has the same form as \eqref{eq:syst_main}, but for a specialized control function $g$. Assuming \eqref{eq2:activation-inhibition-general} satisfies assumption \eqref{eq:assumption-A0} and that $C$ is at steady state, we can carry out the same analysis as in \Cref{lem:Cramer}, and obtain a formula for $\partial_C x_j^*$. Since $\vv 0 = \vv f(\vv x^*(C)) + \ehat_1 g(C, x_1^*(C))$,  $\vv 0 = \left[\frac{\partial \vv f}{\partial \vv x}(\vv x^*(C)) \right]\partial_C \vv x^* - \ehat_1 k_c\left(x_1^* + C \partial_C x_1^* - \delta_\text{act}T_1\right) $, where $\delta_\text{act} = 1$ for activation, and $0$ for inhibition. 
Therefore, in the case of activation, 
\eq{ 
  \frac{\partial x_j^*}{\partial C} &= 
     (-1)^j k_c (T_1 - x_1^*) \frac{ \mm J^*[\hat 1, \hat j]}{\det \left( \mm J^* - \ehat_{11} k_c C\right) }    , 
\intertext{and in the case of inhibition,} 
  \frac{\partial x_j^*}{\partial C} &= 
    (-1)^{j+1} k_c  x_1^* \frac{ \mm J^*[\hat 1, \hat j]}{\det \left( \mm J^* - \ehat_{11} k_c C\right)} ,  
}
where $\mm J^* \coloneqq \frac{\partial \vv f}{\partial \vv x}(\vv x^*(C))$. Comparing these formulas with \eqref{eq:lem-Cramer} indicates that the input edge's sign as defined corroborate with the interpretation of activation/inhibition.

Since $x_1^* \neq 0, T_1$ for any $C$, whether $\partial_C x_j^*$ vanishes or changes sign depends only on $\mm J^*[\hat 1, \hat j]$. Our results in the previous section give necessary conditions for (stable) biphasic response of $x_j$ to the steady state concentration of the external species $\cf{C}$. 

\begin{cor}
\label{cor2:activation-inhibition}
    The necessary conditions for stable biphasic response stated in \Cref{cor:summary} apply to the system \eqref{eq2:activation-inhibition-general} under assumption \eqref{eq:assumption-A1}. 
\end{cor}

\section{Application to immune T-cell activation}
\label{sec:application-KP_ss}

In this section, we study in full generality a model of immune T-cell activation. Following \cite{LeverLimKrugerNguyenEtAl2016}, our model  consists of two parts. The first is a network for kinetic proofreading (\Cref{fig:KP-core}), whereby an antigen $\cf{L}$ binds to a T-cell receptor (TCR) $\cf{R}$, and the resulting TCR complexes $\cf{C}_i$ can be at various stages of phosphorylation. The second half of the model is a downstream input-output system, where one or more of $\cf{C}_i$'s act as input to the species $\cf{X}_1$. We give necessary conditions on the downstream network for when the output of this network could exhibit stable biphasic response.  

In what follows, we let $L(t), R(t)$ denote the concentrations of  \cf{L} and \cf{R}. Similarly, let $C_i(t)$ be the concentration of  $\cf{C}_i$, and $C_T(t) \coloneqq \sum_{j=0}^N C_i(t)$. Where there is no ambiguity, we drop the explicit dependence on $t$. The kinetic proofreading system has two conservation laws: $L_T = L + C_T$ and $R_T = R + C_T$, with $L_T, R_T > 0$.  

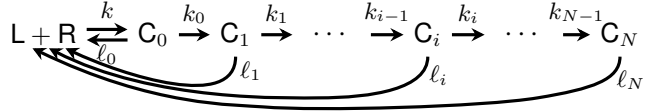
\begin{figure}[ht]
\centering 
\vspace{2.25cm}
    \begin{tikzpicture}[xscale=0.73, yscale=0.6, transform canvas={xshift=-2.35cm, yshift=1.2cm}]
        \begin{scope}
		    \node (i) at (-2,0) {\small $\cf{L}+\cf{R}$}; 
		    \node (0) at (0,0) {\small $\cf{C}_0$};
		    \node (1) at (1.5,0) {\small $\cf{C}_1$};
		    \node (2) at (3.25,0) {\phantom{\,\,}$\cdots$\phantom{\,\,}};
		    \node (3) at (5,0) {\small $\cf{C}_i$};
		    \node (4) at (6.75,0) {\phantom{\,\,}$\cdots$\phantom{\,\,}};
		    \node (5) at (8.5,0) {\small $\cf{C}_N$};
            \draw [fwdrxn, EdgeShift={0pt}{2pt}] (i)--(0) node [midway, above] {\ratecnst{$\kk$}};
            \draw [fwdrxn] (0)--(1) node [midway, above] {\ratecnst{$\kk_{0}$}};
            \draw [fwdrxn] (1)--(2) node [midway, above] {\ratecnst{$\kk_{1}$}};
            \draw [fwdrxn] (2)--(3) node [midway, above] {\ratecnst{$\kk_{i-1}$}};
            \draw [fwdrxn] (3)--(4) node [midway, above] {\ratecnst{$\kk_{i}$}};
            \draw [fwdrxn] (4)--(5) node [midway, above] {\ratecnst{$\kk_{N-1}$}};
            \draw [fwdrxn, EdgeShift={0pt}{-2pt}] (0) --(i)  node [midway, below=-2.5pt] {\ratecnst{$\ell_{0}$}} ;
            \draw [fwdrxn] (1.south) to[out=270, in=330] (-1.6,-0.3);
                \node at  (1.4,-0.8) [right] {\ratecnst{$\ell_{1}$}};
            
            \draw [fwdrxn] (3.south) .. controls (5,-1.2) and (4,-1.4) .. (3,-1.4)
            .. controls (2.5,-1.4) and (0,-1.5) .. (-1.9,-0.3); 
                \node at (5.2,-0.9) {\ratecnst{$\ell_{i}$}};
            
            \draw [fwdrxn] (5.south) .. controls (8.5,-1) and (8,-1.2) .. (7,-1.4)
            .. controls (6,-1.6) and (3,-1.7) .. (2,-1.6)
            .. controls (1.5,-1.55) and (0,-1.6) .. (-2.2,-0.3);
                \node at (8.7,-1) {\ratecnst{$\ell_{N}$}};
        \end{scope}
    \end{tikzpicture}
\caption{Kinetic proofreading network.}
\label{fig:KP-core}
\ifThisIsArxivVersion\else \vspace{-0.5cm}\fi
\end{figure}

Assume the kinetic proofreading network in \Cref{fig:KP-core} evolves according to mass-action kinetics: $\dot{\vv z} = \vv F(\vv z; L_T)$ where $\vv z = (L,R, C_0,\ldots, C_N)^\top$. The form of $\vv F(\vv z; L_T)$ is shown in \citeVersion{\eqref{eq:app-KP_core_ODE} of \Cref{sec:app-KP_core}}. 
This system has a globally attracting steady state within its stoichiometric compatibility class for any $L_T, R_T > 0$~\cite{Sontag2001}. The steady state response ${C}_i^*(L_T)$ is monotonically increasing; we derive in \citeVersion{\Cref{sec:app-application-KP_ss}} an analytical expression for the response: $C_i^* =  A_i C_T^*$, where  
\begin{align}
\label{eq:KP-CT}
\ifThisIsArxivVersion
    C_T^* = \frac{L_T + R_T + \frac{A}{\kk}}{2} - \sqrt{ \left( \frac{L_T + R_T + \frac{A}{\kk}}{2}\right)^2 - L_T R_T} , 
\else 
    C_T^* = \frac{L_T \!+\! R_T \!+\! \frac{A}{\kk}}{2} - \!\!\sqrt{\! \left( \frac{L_T \!+\! R_T \!+\! \frac{A}{\kk}}{2}\right)^2 \!\!- L_T R_T} , 
\fi
\end{align}
and $A_i, A, k > 0$ are constants. \citeVersion{\Cref{sec:app-CT_monotone}} contains a proof that $\frac{\partial C_i^*}{\partial L_T} > 0$ for all $L_T > 0$.

\subsection{Stable biphasic response in immune T-cell activation}

Inspired by \cite{LeverLimKrugerNguyenEtAl2016}, we consider a system where one or more of the TCR complexes $\cf{C}_i$ either all activate or all inhibit $\cf{X}_1$, and the species $\cf{X}_i$ could either activate or inhibit each other.  The full system has the form 
\begin{align}
\begin{split}
\label{eq:KP-full}
    \dot{\vv z} &= \vv F(\vv z; L_T)  \\
    \dot{\vv x} &= \vv f(\vv x) + \ehat_1 g(\vv z, x_1),
\end{split}
\end{align}
where $\vv f(\vv x)$ comes from an activation-inhibition network (e.g., see \eqref{eq2:activation-inhibition-general}), and $g(\vv z, x_1)$ is either $\sum_{i \in \Lambda} \kk_i C_i(T_1 - x_1)$ in the case of activation, or $- \sum_{i \in \Lambda} \kk_i C_ix_1$ in the case of inhibition, for some $\emptyset \neq \Lambda \subseteq \{0,1,\ldots, N\}$. 
We make the same assumptions as before: the entries of $\frac{\partial \vv f}{\partial \vv x}$ are constant in sign,  the diagonals are negative, and \eqref{eq:KP-full} satisfies (A1).

\begin{thm}
\label{thm:KP-full}
    Consider \eqref{eq:KP-full} with activation (respectively inhibition) of $\cf{X}_1$. Let  $(\vv z^*(L_T), \vv x^*(L_T))$ be a stable branch of steady states for $L_T$ in some interval $J \subset \rrpp$. Let $G$ be the J-graph of $\frac{\partial \vv f}{\partial \vv x}$.  
    Suppose $x_j^*(L_T)$ is stable biphasic. 
    \begin{enumerate}
        \item If $j=1$, then $G$ has a \thmPFBL\ and a \thmNFBL.
        \item If $j \neq 1$, then $G$ either has an \thmIFFL, or it has a \thmPFBL\ and a \thmNFBL. 
    \end{enumerate}
\end{thm}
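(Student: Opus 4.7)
The plan is to reduce this theorem to \Cref{cor:summary} by viewing the full system \eqref{eq:KP-full} as a downstream input--output system whose scalar effective control is a monotone transform of $L_T$. Since the upstream kinetic proofreading module does not depend on $\vv x$, the cascade structure lets us freeze the upstream at its steady state and treat the downstream alone.

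First, I would use the structural facts recalled just before the theorem: the unique positive steady state of the upstream satisfies $C_i^*(L_T)= A_i C_T^*(L_T)$ with $A_i>0$ constants, and $C_T^*(L_T)$ is given by \eqref{eq:KP-CT} and is strictly monotonically increasing and $C^1$ in $L_T$. Consequently, the quantity
\begin{align*}
    u(L_T) \;\coloneqq\; \sum_{i \in \Lambda} \kk_i\, C_i^*(L_T) \;=\; \Bigl(\sum_{i \in \Lambda} \kk_i A_i\Bigr) C_T^*(L_T)
\end{align*}
is a strictly increasing $C^1$ diffeomorphism from $J$ onto its image $J'\subset \rrpp$. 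Substituting $\vv z=\vv z^*(L_T)$ into the expression for $\tilde g$, the downstream ODE becomes exactly the activation (or inhibition) form of \eqref{eq:syst_main} with scalar control $u\in J'$:
\begin{align*}
    \dot{\vv x} \;=\; \vv f(\vv x) + \ehat_1\bigl[\kk_\text{on}(x_T-x_1) - \kk_\text{off} x_1 \pm u(x_T-x_1 \text{ or } x_1)\bigr].
\end{align*}

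Next, I would verify that this reduced input--output system inherits (A1) from the full composite system. Because the upstream is autonomous in $\vv z$, the Jacobian of \eqref{eq:KP-full} at any equilibrium is block lower triangular, so Hurwitzness of the full Jacobian forces Hurwitzness of $\partial \vv F/\partial \vv z$ and of the downstream block $\partial(\vv f + \ehat_1 \tilde g)/\partial \vv x$ separately; convergence of trajectories in the full cascade then yields convergence of the downstream trajectories once the upstream has locked onto $\vv z^*(L_T)$. Since $u(L_T)$ is a strict diffeomorphism, $x_j^*(L_T)$ being a stable biphasic branch in $L_T$ is equivalent to $x_j^*(u)$ being a stable biphasic branch in $u$ for the reduced system.

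I then apply \Cref{cor:summary} to this reduced input--output system. The J-graph of the reduced system is $G$ augmented by the activation/inhibition edge from the control node $\cf{C}$ into $x_1$ together with the output edge at $x_j$; feedback loops and IFFLs ``in the J-graph'' in the sense of \Cref{cor:summary} live inside $G$ itself. The corollary produces the stronger conclusions: in case $j=1$ a PFBL disjoint from $x_1$ together with an NFBL reachable to and from $x_1$, and in case $j\ne 1$ either an IFFL from $x_1$ to $x_j$ or a PFBL vertex-disjoint from some I/O path together with an NFBL reachable from $x_1$ and to $x_j$. Dropping the disjointness and reachability refinements yields the theorem as stated.

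The only real obstacle is step two, the transfer of (A1) through the cascade; everything else is essentially bookkeeping around the monotonicity of $u(L_T)$ and the reuse of \Cref{cor:summary}. This obstacle is mild because the cascade's block-triangular Jacobian together with the global stability of the upstream from \cite{Sontag2001} makes stability of the reduced downstream system automatic from stability of the full composite one.
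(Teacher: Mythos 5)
Your proposal is correct and follows essentially the same route as the paper: freeze the globally stable upstream at $\vv z^*(L_T)$, observe that the downstream then becomes an instance of \eqref{eq:syst_main} with effective control $u = K C_T^*(L_T)$ where $K=\sum_{i\in\Lambda}\kk_i A_i$, invoke the monotonicity of $C_T^*$ in $L_T$, and apply the structural necessary conditions (the paper cites the contrapositive of \Cref{thm:stable-biphasic}, you cite \Cref{cor:summary}, which packages the same content together with \Cref{thm:graph-QA-i=j,thm:graph-QA-i!=j} and is in fact what is needed to obtain the PFBL part of the conclusion). Your additional care about transferring (A1) through the block-triangular cascade is a detail the paper glosses over, not a different argument.
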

\begin{proof}
    Since $\vv z^*(L_T)$ is globally stable for any $L_T > 0$, we are interested only in the downstream system. Consider an alternative system $\dot{\vv x} = \vv f(\vv x) + \ehat_1 g^*(C_T^*, x_1)$ where $g^*(C_T^*, x_1) = K C_T^*(T_1-x_1)$ in the case of activation, and $-KC_T^*x_1$ in the case of inhibition, where $K \coloneqq \sum_{i \in \Lambda} k_i A_i > 0$ and $C_T^*$ is given by \eqref{eq:KP-CT}. The original ODEs for $\vv x$ and this alternative system share $\vv x^*(L_T)$ as a stable branch of steady states. The alternative system is precisely of the form \eqref{eq2:activation-inhibition-general}, but where the activating/inhibiting \lq\lq species\rq\rq\ $C_T$ is at steady state. Applying \Cref{cor2:activation-inhibition} to this alternative system  gives us the result we desire. 
\end{proof}

\section{Discussion} 
\label{sec:discussion}

We were initially motivated to provide a rigorous proof for the claim in \cite{LeverLimKrugerNguyenEtAl2016}: that IFFL is necessary for what we now call stable biphasic response. Against this claim, we gave counterexamples (e.g., \Cref{ex:biorxiv2}), which were not among the 58'905 networks computationally explored in \cite{LeverLimKrugerNguyenEtAl2016}. Instead we proved a weaker version: that either an IFFL must exist, or both a PFBL and a NFBL must exist. 

Our results are phrased for systems with one control and one input variable, but as \Cref{thm:KP-full} shows, there are easy extensions, such as when the different controls are all scalar multiples of a function. Furthermore, if a control $u$ affects two or more variables, we can redefine $u$ as a variable, and formally introduce a new control $v$, i.e., $x_0 \coloneqq u$,  $\dot x_0 = -x_0 + v$ where $x_0(0) = v$. Then our results hold for the extended system $(x_0, \vv x)$. Clearly in this case, the J-graph should include $x_0$ as a variable node. See \citeVersion{\Cref{sec:app-multiple_inputs}} for details.

\bibliographystyle{IEEEtran}
\bibliography{2024_cdc_polly_eduardo.bib}{}

\clearpage 
\newpage 
\onecolumn

\appendix
\renewcommand\thefigure{\thesubsection\arabic{figure}}
\setcounter{figure}{0}   
\renewcommand\theequation{\thesubsection\arabic{equation}}
\setcounter{equation}{0} 
\renewcommand\thethm{\thesubsection\arabic{thm}}
\setcounter{thm}{0} 

\captionsetup{margin=0.1\textwidth}

\bigskip \bigskip

\subsection{\sc An example of activation-inhibition network} 
\label{sec:app-internal-activation-inhibition}

\begin{figure}[h]
\centering
        \begin{tikzpicture}
            \node[Jnode, fill=white] (c) at (0,0) {$\cf{C}$}; 
            \node[Jnode] (x) at (2,0) {$x_1$}; 
            \node[Jnode] (z) at (4,0) {$x_3$}; 
            \node[Jnode] (y) at (3,-1.3) {$x_2$};
            
            \draw[posEdge, ActivEdgeGreen] (x) -- (z) node [midway, above] {\scriptsize activation};  
            \draw[posEdge, ActivEdgeGreen] (z) -- (y) node[midway, below right] {\scriptsize activation};  
            \draw[posEdge, ActivEdgeGreen] (c)--(x) node[midway, above] {\scriptsize activation};
            \draw[negEdge, InhibEdgeRed] (y)--(x) node[midway, below left] {\scriptsize inhibition\!\!};
            \draw[posEdge] (z)--(5.5,0); 
        \end{tikzpicture}
\caption{An activation-inhibition network with $3$ species and external species $\cf{C}$.}
\label{fig:activation-inhibition-network}
\end{figure}
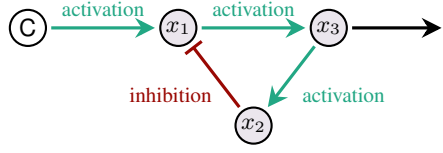

We give a concrete example of an activation-inhibition network as described in \Cref{sec:activation-inhibition}. Consider the network given by the J-graph in \Cref{fig:activation-inhibition-network}. This represents the following reactions 
    \begin{align}
    \begin{split}
    \label{eq:app-activation-inhibition-CRN} 
    \begin{tikzpicture}
    \node at (0,0.5) {};
    \begin{scope}[EdgeShift={-5.75cm}{0pt}]
        \begin{scope}[EdgeShift={0cm}{0pt}] 
            \node (a) at (0,0) [left] {$\cf{Z}_1$};
            \node (b) at (1,0) [right] {$\cf{X}_1$};
            \draw[fwdrxn, EdgeShift={0pt}{2pt}] (a)--(b) node [midway, above] {\ratecnst{$\kk_\text{on}^1$}}; 
            \draw[fwdrxn, EdgeShift={0pt}{-2pt}] (b)--(a) node [midway, below] {\ratecnst{$\kk_\text{off}^1$}};
        \end{scope} 
        \begin{scope}[EdgeShift={3cm}{0pt}] 
            \node (a) at (0,0) [left] {$\cf{Z}_2$};
            \node (b) at (1,0) [right] {$\cf{X}_2$};
            \draw[fwdrxn, EdgeShift={0pt}{2pt}] (a)--(b) node [midway, above] {\ratecnst{$\kk_\text{on}^2$}}; 
            \draw[fwdrxn, EdgeShift={0pt}{-2pt}] (b)--(a) node [midway, below] {\ratecnst{$\kk_\text{off}^2$}};
        \end{scope}
        \begin{scope}[EdgeShift={6cm}{0pt}] 
            \node (a) at (0,0) [left] {$\cf{Z}_3$};
            \node (b) at (1,0) [right] {$\cf{X}_3$};
            \draw[fwdrxn, EdgeShift={0pt}{2pt}] (a)--(b) node [midway, above] {\ratecnst{$\kk_\text{on}^3$}}; 
            \draw[fwdrxn, EdgeShift={0pt}{-2pt}] (b)--(a) node [midway, below] {\ratecnst{$\kk_\text{off}^3$}};
        \end{scope}
        \begin{scope}[EdgeShift={9.75cm}{0pt}] 
            \node (a) at (0,0) [left] {$\cf{C} + \cf{Z}_1$};
            \node (b) at (1,0) [right] {$\cf{C} + \cf{X}_1 $};
            \draw[fwdrxn] (a)--(b) node [midway, above] {\ratecnst{$\kk_{c}$}}; 
        \end{scope}
    \end{scope}
    \begin{scope}[EdgeShift={-5.5cm}{-1.25cm}]
        \begin{scope}[EdgeShift={0cm}{0pt}] 
            \node (a) at (0,0) [left] {$\cf{X}_1 + \cf{Z}_3$};
            \node (b) at (1,0) [right] {$\cf{X}_1 + \cf{X}_3$};
            \draw[fwdrxn] (a)--(b) node [midway, above] {\ratecnst{$\kk_{13}$}}; 
        \end{scope}
        \begin{scope}[EdgeShift={5cm}{0pt}] 
            \node (a) at (0,0) [left] {$\cf{X}_3 + \cf{Z}_2$};
            \node (b) at (1,0) [right] {$\cf{X}_3 + \cf{X}_2$};
            \draw[fwdrxn] (a)--(b) node [midway, above] {\ratecnst{$\kk_{32}$}}; 
        \end{scope}
        \begin{scope}[EdgeShift={10cm}{0pt}] 
            \node (a) at (0,0) [left] {$\cf{X}_2 + \cf{X}_1$};
            \node (b) at (1,0) [right] {$\cf{X}_2 + \cf{Z}_1$};
            \draw[fwdrxn] (a)--(b) node [midway, above] {\ratecnst{$\kk_{21}$}}; 
        \end{scope}
    \end{scope} 
    \node at (0,-1.5) {};
    \end{tikzpicture}
   \end{split} 
    \end{align}
where $\cf{Z}_i$ is the inactive form of $\cf{X}_i$. 

Under mass-action kinetics, the concentrations evolve according to 
\eq{ 
    -\dot{z}_1 = \dot{x}_1 &= k_\text{on}^1 z_1 - k_\text{off}^1 x_1 - k_{21} x_1x_2 + k_c C z_1
    \\
    -\dot{z}_2 = \dot{x}_2 &= k_\text{on}^2 z_2 - k_\text{off}^2 x_2 + k_{32} x_3 z_2
    \\
    -\dot{z}_3 = \dot{x}_3 &= k_\text{on}^3 z_3 - k_\text{off}^3 x_3 + k_{13} x_1 z_3.
}
Since $z_i + x_i = T_i$ for some total concentration $T_i > 0$, we can eliminate $z_i$. The reduced system  is 
\eq{ 
    \dot{x}_1 &= k_\text{on}^1 (T_1 - x_1) - k_\text{off}^1 x_1 - k_{21} x_1x_2 + k_c C (T_1 - x_1)
    \\
    \dot{x}_2 &= k_\text{on}^2 (T_2 - x_2) - k_\text{off}^2 x_2 + k_{32} x_3 (T_2 - x_2)
    \\
    \dot{x}_3 &= k_\text{on}^3 (T_3 - x_3) - k_\text{off}^3 x_3 + k_{13} x_1 (T_3 - x_3),
}
where $0 < x_i < T_i$, since the positive orthant is forward-invariant under mass-action kinetics~\citeSM[Lemma II.1]{Sontag2001b}. Its Jacobian matrix is 
\eq{ 
    \mm J = \begin{pmatrix}
        -k_\text{on}^1 - k_\text{off}^1 - k_{21}x_2 -  k_c C & -k_{21} x_1  & 0 \\ 
        0 & -k_\text{on}^2 - k_\text{off}^2 - k_{32} x_3 
        & k_{32} (T_2 - x_2) \\ 
        k_{13}(T_3 - x_3) & 0 & -k_\text{on}^3 - k_\text{off}^3 - k_{13} x_1 
    \end{pmatrix}. 
}
Since $(T_i - x_i) > 0$, the J-graph in \Cref{fig:activation-inhibition-network} (without the colours and texts) can be interpreted as the J-graph of the reduced system.

\bigskip\bigskip

\newpage 
\subsection{\sc Details of \Cref{ex:biorxiv2}} 
\label{sec:app-ex:bioarxiv2}

This example first appeared in \citeSM{Sontag2020_biphasic_biorxivb}, which also contains  its derivation. 

A small technical issue is that the saturation $\sigma $ in the system \eqref{eq:ex:biorxiv2}, while a (globally) Lipschitz function (insuring unique and everywhere defined solutions of the ODE) is not differentiable at those states for which either $x_1$ or $x_2$ equals exactly $0.8$ or $1.2$, which happens only for a set of measure zero.  Thus the Jacobian is not well-defined at every state in the classical sense.  However, it is well-defined in the sense of nonsmooth analysis, and in any event the signs of interactions reflect the increasing or decreasing influence of each variable on every other variable, and this can be defined with no need to take derivatives.  The saturation function $\sigma $ could be replaced by an approximating differentiable function, but use of $\sigma $ makes calculations somewhat simpler.

\Cref{fig:app-bioarxiv2-ss} shows the steady state response curves for all variables, as well as the eigenvalues along the steady state curve, indicating asymptotic stability. \Cref{fig:app-bioarxiv2-ss-z_w} shows, in particular, the stable biphasic response of $x_3$ and $x_4$.  Global stability was indicated by simulations with random initial conditions. 

\begin{figure}[ht]
\centering 
\begin{subfigure}[t]{0.315\textwidth}
    \centering 
    \includegraphics[height=1.35in]{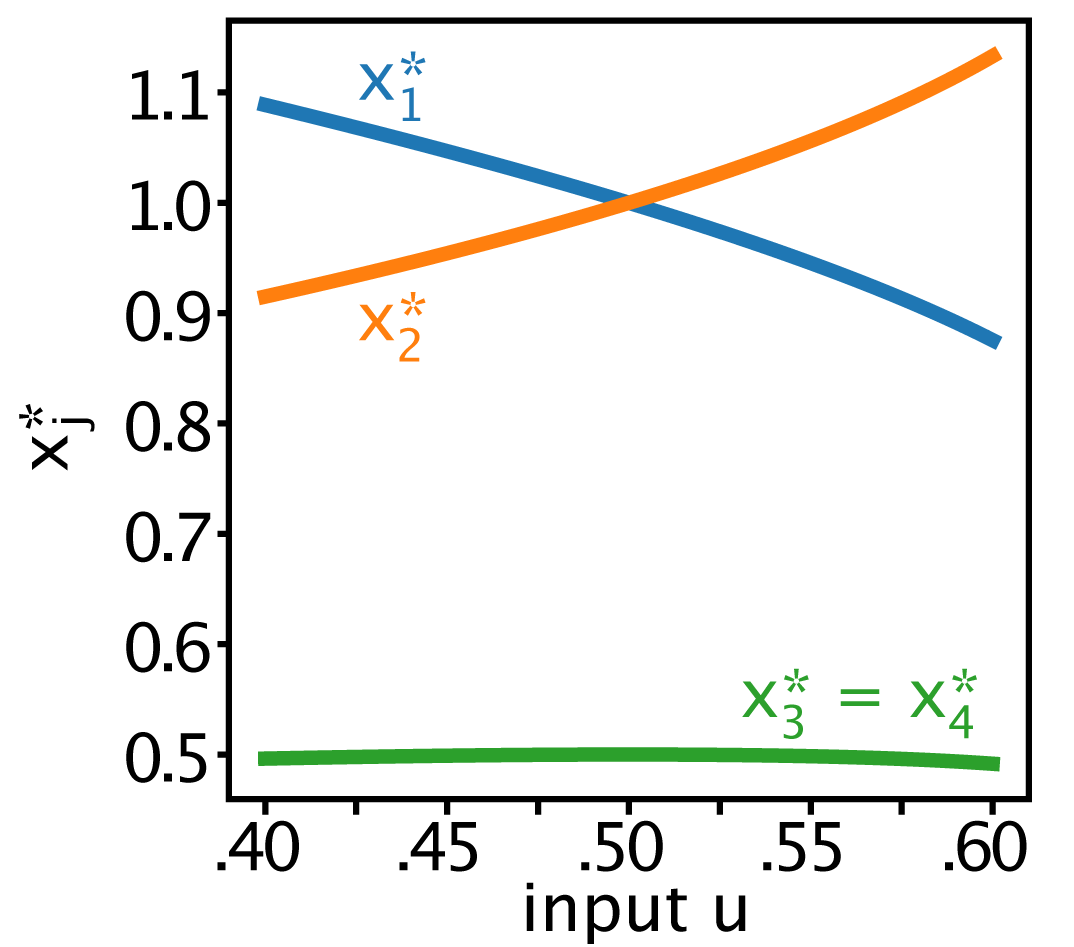} 
    \caption{}
    \label{fig:app-bioarxiv2-ss-all}
\end{subfigure}
\begin{subfigure}[t]{0.34\textwidth}
    \centering 
    \includegraphics[height=1.35in]{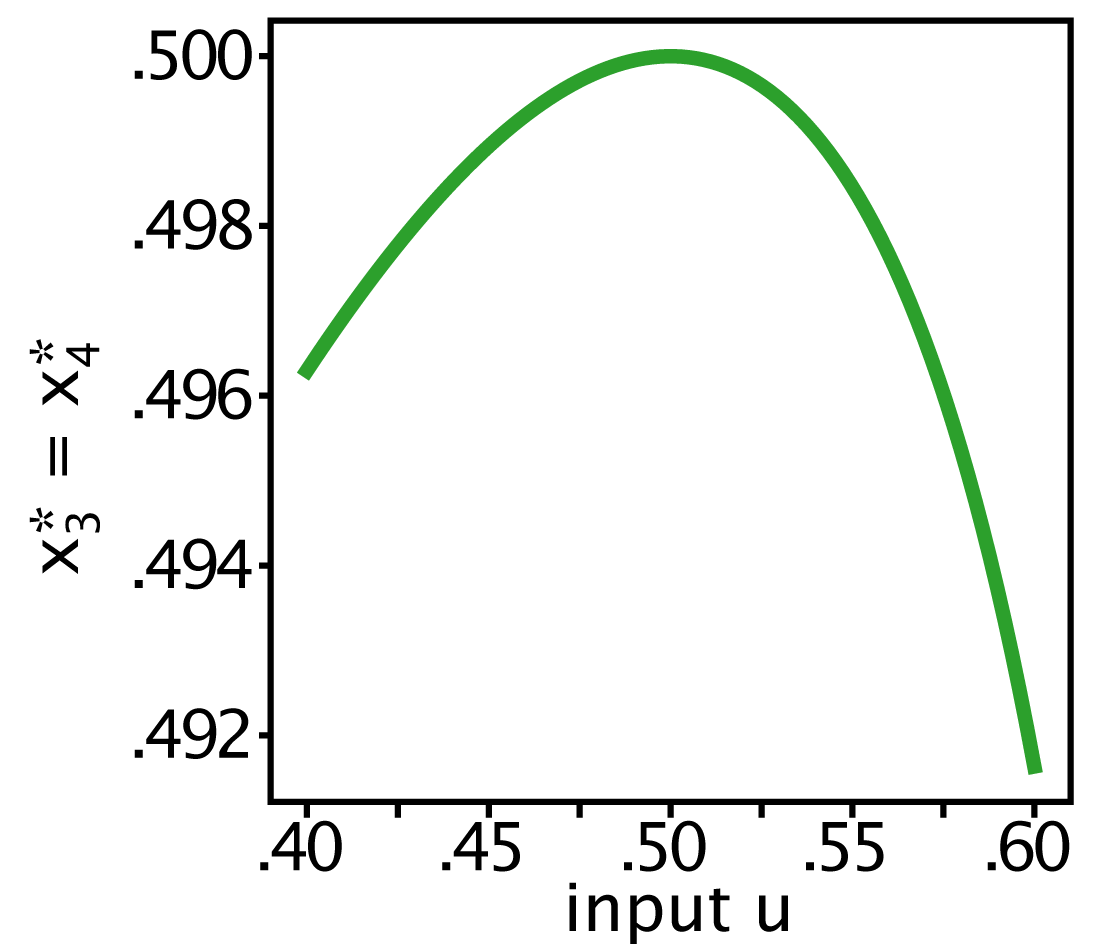}\hspace{0.25cm} 
    \caption{}
    \label{fig:app-bioarxiv2-ss-z_w}
\end{subfigure}
\begin{subfigure}[t]{0.325\textwidth}
    \centering 
    \includegraphics[height=1.35in]{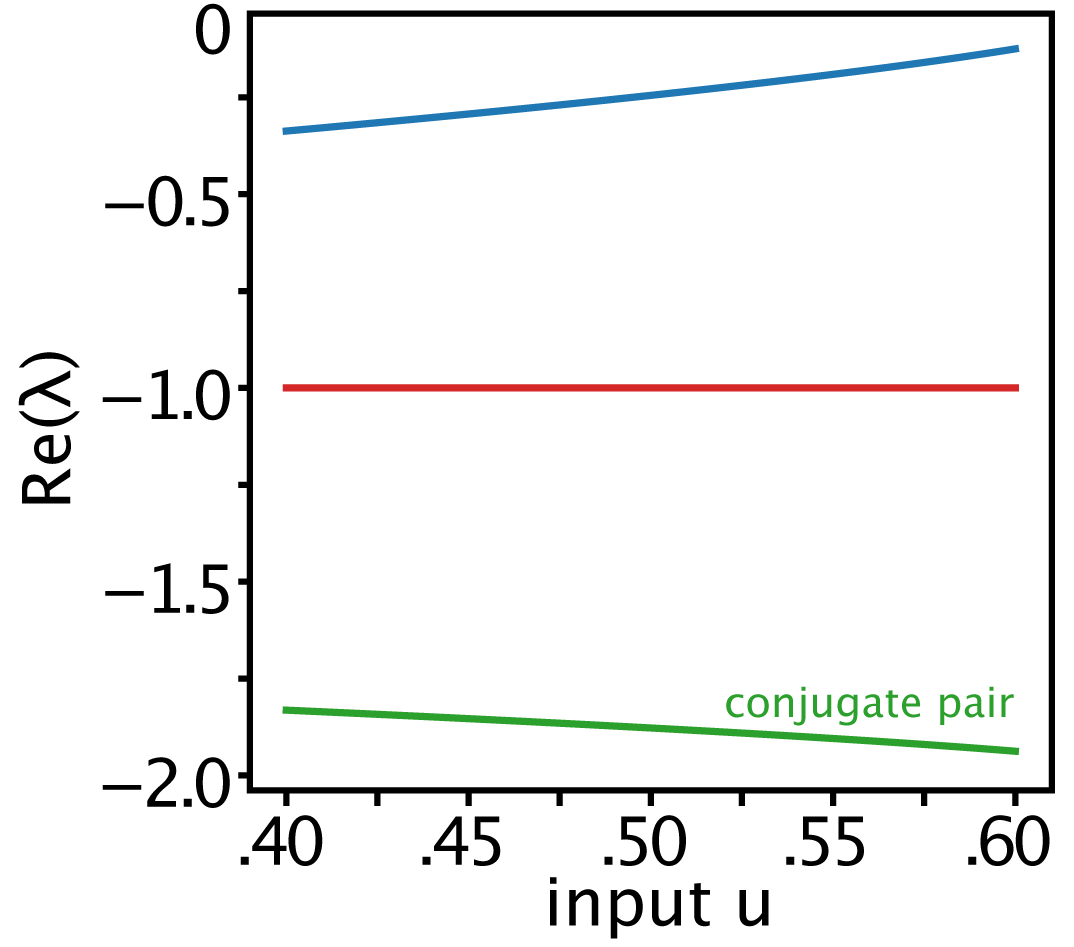} \hspace{0.1cm} 
    \caption{}
    \label{fig:app-bioarxiv2-ss-evals}
\end{subfigure}
\caption{
   (a) Steady state curves of \Cref{ex:biorxiv2}. (b) $x_3, x_4$ exhibit biphasic responses where $x_3^* = x_4^*$. (c) Real parts of eigenvalues at steady state, demonstrating that the steady state response is stable. 
}
\label{fig:app-bioarxiv2-ss}
\end{figure}

\bigskip
\subsubsection{Existence and uniqueness of steady state}

We show that for each $u \in [0.4,0.6]$, there exists a unique positive steady state. For example, one can easily check that for $u_0 = 0.5$, the point $(1,1,0.5,0.5)$ is a steady state for \Cref{ex:biorxiv2}.  More generally, from $\dot x_4 = 0$, we know that $x_4 = x_3$ at steady state. From $\dot x_1 = 0$, we have 
\begin{align}
\label{eq:app-biorxiv2-x1}
    x_1 = e^{1-\sigma(x_2)}. 
\end{align}
Solving $\dot x_3=0$ for $x_3$, and using \eqref{eq:app-biorxiv2-x1} so that $\left(e^{(1-\sigma(x_2))}\right)^2 =x_1^2$, we obtain
\begin{align}
\label{eq:app-biorxiv2-x3x4}
    x_3 = x_4 =  \frac{x_1^2}{4} + \frac{u}{2} .
\end{align}
Finally, we set $\dot x_2=0$, substituting the above $x_3$, to obtain an equation relating
$x_1$, $x_2$, and $u$:
\begin{align*}
    \frac{e^{2(1-\sigma(x_1))}}{2} - x_2 +  \frac{u}{2}  + \frac{x_1^2}{4}  = 0 . 
\end{align*}
This equation is equivalent to
\begin{align}
\label{eq:app-biorxiv2-U}
    u = U(x_1,x_2) \coloneqq  -\frac{x_1^2}{2} + 2x_2 - e^{2(1-\sigma(x_1))} .
\end{align}
Now we will show that, for each $u\in [0.4,0.6]$, there is a unique solution $(x_1,x_2)$ for $U(x_1,x_2)=u$, and moreover, $x_1\in [0.8,1.2]$ and $x_2\in [0.8,1.2]$, i.e., in the nonsaturation regime, case (1A) below. We do this by analyzing various cases.

\medskip 
\noindent\textbf{Case 1: $0.8 \leq x_2 \leq 1.2$.}
In this case, \eqref{eq:app-biorxiv2-x1} gives $x_2 = 1 - \ln x_1$, so 
\begin{align}
\label{eq:app-biorxiv2-U-case1}
    U(x_1,x_2) = -\frac{x_1^2}{2} +2 - 2\ln x_1 - e^{2(1-\sigma(x_1))} .
\end{align}
For $0.8 \leq x_1 \leq 1.2$, the above expression is  $U(x_1, 1-\ln x_1) = -\frac{x_1^2}{2} +2 - 2\ln x_1 - e^{2(1-x_1)}$, a plot of which is shown in \Cref{fig:app-bioarxiv2-pf-ss_existence-Case1}, middle, blue segment. This is a decreasing function on the interval $x_1 \in [0.8,1.2]$. For each $u \in [0.4,0.6]$, there exists a unique $x_1 \in [0.8,1.2]$ solving \eqref{eq:app-biorxiv2-U}; see the shaded blue band in \Cref{fig:app-bioarxiv2-pf-ss_existence-Case1}. This also returns $x_2 = 1 - \ln x_1$, and $x_3,x_4$ via \eqref{eq:app-biorxiv2-x3x4}. 

If instead, $x_1 < 0.8$, then $U(x_1,1-\ln x_1) = -\frac{x_1^2}{2} +2 - 2\ln x_1- e^{0.4}$, which is decreasing on $x_1 \in [0,0.8]$ with minimum value of $U \approx 0.63$ (\Cref{fig:app-bioarxiv2-pf-ss_existence-Case1}, left, orange segment). Finally, if $x_1 > 1.2$, then $U(x_1,1-\ln x_1) = -\frac{x_1^2}{2} +2 - 2\ln x_1- e^{-0.4}$, which is decreasing on $x_1 \in [1.2, \infty)$, with maximum value of $U \approx 0.245$ (\Cref{fig:app-bioarxiv2-pf-ss_existence-Case1}, right, green segment). So \eqref{eq:app-biorxiv2-U} has no corresponding solution for which $x_1 < 0.8$ or $x_1 > 1.2$.

\begin{figure}[ht]
\centering 
\begin{subfigure}[t]{0.315\textwidth}
    \centering 
    \includegraphics[height=1.35in]{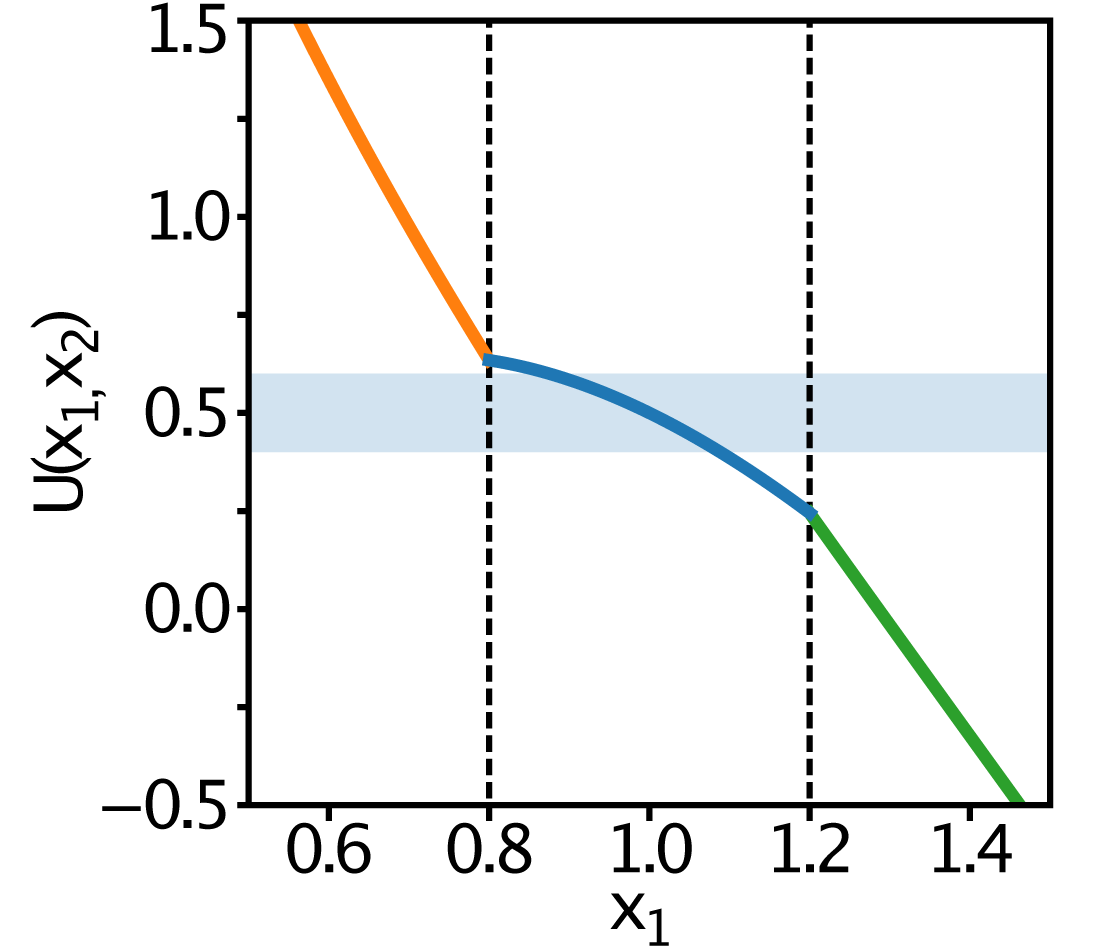}
    \caption{}
    \label{fig:app-bioarxiv2-pf-ss_existence-Case1}
\end{subfigure}
\begin{subfigure}[t]{0.34\textwidth}
    \centering 
    \includegraphics[height=1.35in]{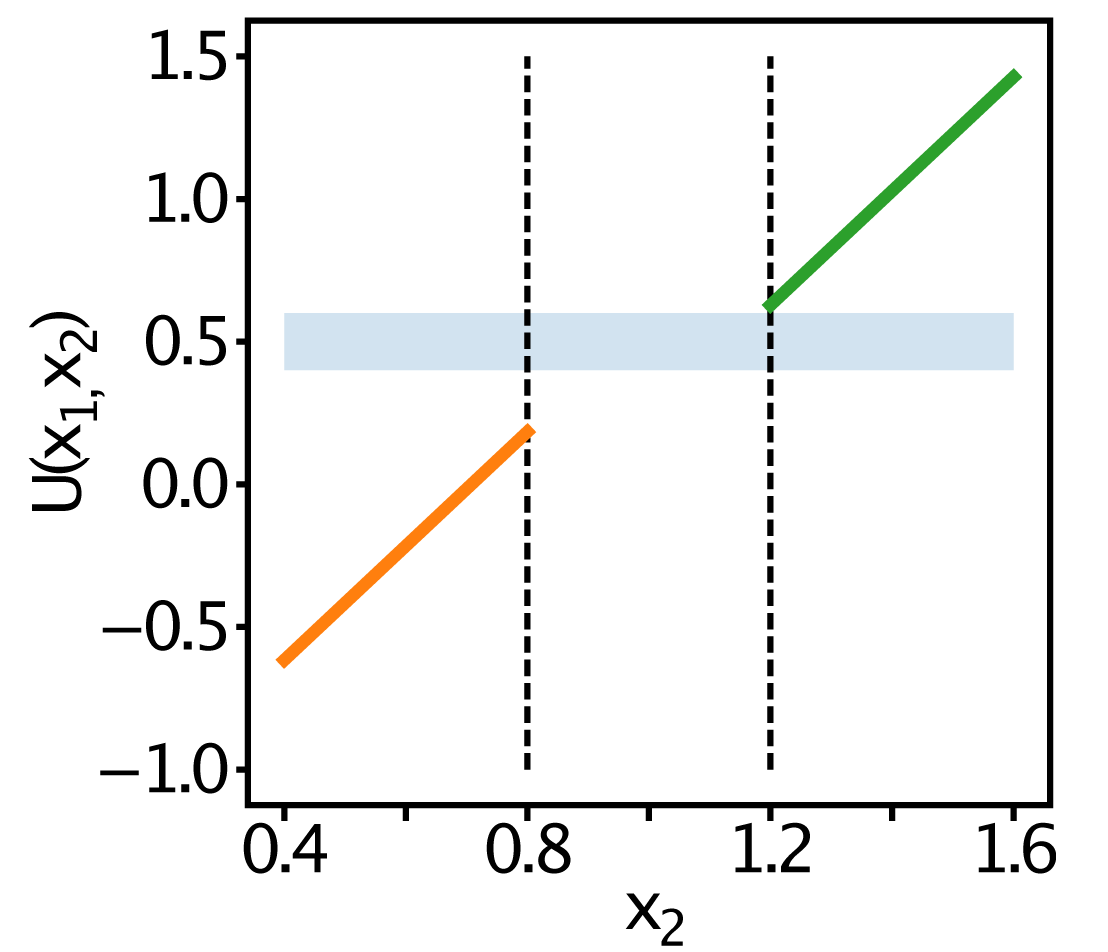}
    \caption{}
    \label{fig:app-bioarxiv2-pf-ss_existence-Case2_3}
\end{subfigure}
\caption{
   Graph of \eqref{eq:app-biorxiv2-U} in different cases. (a) $U(x_1,x_2)$ when $0.8 \leq x_2 \leq 1.2$ (case 1), as given by \eqref{eq:app-biorxiv2-U-case1}. (b) $U(x_1,x_2)$ when $x_2 < 0.8$ (case 2) and when $x_2 > 1.2$ (case 3). Shaded region is the range $[0.4,0.6]$ in which $u$ lies. Solution to $u = U(x_1,x_2)$ exists only when $0.8 \leq x_1, x_2 \leq 1.2$. 
}
\label{fig:app-bioarxiv2-pf-ss_existence}
\end{figure}

\medskip 
\noindent\textbf{Case 2: $x_2 < 0.8$.}
In this case, \eqref{eq:app-biorxiv2-x1} gives $x_1 = e^{0.2} \approx 1.22$, and $\sigma(x_1) = 1.2$. Then \eqref{eq:app-biorxiv2-U} with 
\eq{ 
    U(x_1,x_2) =  -\frac{e^{0.4}}{2} + 2x_2 - e^{-0.4}  
}
has no solution to \eqref{eq:app-biorxiv2-U} for $0.4 \leq u \leq 0.6$ when  $x_2 < 0.8$, since $U(x_1,x_2) < 0.19$ (\Cref{fig:app-bioarxiv2-pf-ss_existence-Case2_3}, left, orange segment). 

\medskip 
\noindent\textbf{Case 3: $1.2 < x_2$.} In this case, \eqref{eq:app-biorxiv2-x1} gives $x_1 = e^{-0.2} \approx 0.81$, and $\sigma(x_1) = x_1$. Then \eqref{eq:app-biorxiv2-U} with 
\eq{ 
    U(x_1,x_2) =  -\frac{e^{-0.4}}{2} + 2x_2 - e^{2(1-e^{-0.2})}  
}
has no solution to \eqref{eq:app-biorxiv2-U} for $0.4 \leq u \leq 0.6$ when $x_2 >1.2 $, since $U(x_1,x_2) > 0.62$ (\Cref{fig:app-bioarxiv2-pf-ss_existence-Case2_3}, right, green segment).

We conclude that steady states are unique, and occur only when $0.8 \leq x_1, x_2 \leq 1.2$, when neither variable is saturated. In this case, we solve \eqref{eq:app-biorxiv2-U}  numerically to get $x_1=x_1^*(u)$ for each constant input $u$ in the interval $[0.4,0.6]$ and back substitute to obtain the remaining variables, obtaining the steady state curve $\vv x^*(u)$ in \Cref{fig:app-bioarxiv2-ss-all}.

\bigskip
\subsubsection{Stability of steady state response curve}
It is not difficult to see that, along the steady state curve, the Jacobian matrix of \eqref{eq:ex:biorxiv2} is 
\begin{align*}
    \mm J = \begin{pmatrix}
        -1 & - C & 0  & 0\\ 
        -e^{2(1 - C)}  & -1 & 1 & 0 \\ 
        0 & -C^2 & -2  & 0 \\ 
        0 & 0 & 0 & -1 
    \end{pmatrix}, 
\end{align*}
where $C \coloneqq e^{1-x_2^*} > 0$. Its J-graph is the one in \Cref{fig:biorxiv2-Jgraph}. 
The real part of the eigenvalues of $\mm J$ (\Cref{fig:app-bioarxiv2-ss-evals})  is always negative, showing asymptotic stability. Simulations from random initial states indicate that the state $\vv x^*(u)$ for $u\in [0.4,0.6]$
is globally asymptotically stable.

\bigskip\bigskip 
\newpage 
\subsection{\sc Details of \Cref{ex:biphasic_unstable}} 
\label{sec:app-ex:biphasic_unstable}

\Cref{ex:biphasic_unstable} is generated by the reactions 
\begin{align*}
\begin{split}
    \begin{tikzpicture}[scale=1]
        \node (0) at (0,0) {$\cf{0}$};
        \node (1) at (1.75,0) {$\cf{X}_1$};
        \node (4) at (1.75,-1.5)  {$\cf{X}_4$};
        \node (2) at (3.5,0) {$\cf{X}_2$};
        \node (3) at (3.5,-1.5) {$\cf{X}_3$};
        \node (22) at (5.5,-1.5){$2\cf{X}_2$};
        \draw[fwdrxn] (0)--(1) node[midway, above] {\ratecnst{{$u$}}}; 
        \draw[fwdrxn] (1)--(4) node[midway, left] {\ratecnst{$1$}};
        \draw[fwdrxn] (1)--(2) node[midway, above] {\ratecnst{$1$}};
        \draw[fwdrxn] (3)--(2) node[midway, right] {\ratecnst{$1$}};
        \draw[fwdrxn] (3)--(4) node[midway, above] {\ratecnst{$3/4$}};
        \draw[fwdrxn] (4) --(0)  node [midway, below left] {\ratecnst{$1$}\!\!};
        \draw[fwdrxn] (22)--(3) node[midway, above] {\ratecnst{$1$}};
        \begin{scope}[shift={(-4.3,-2.75)}]
        \node (33) at (6,0) {$2\cf{X}_3$};
        \node (332) at (7.5,0) [right] {$2\cf{X}_3 + \cf{X}_2$};
        \draw[fwdrxn] (33)--(332) node[midway, above] {\ratecnst{$1/4$\,}};
        \end{scope}
    \node at (0, -3.25) {};
    \end{tikzpicture}
\end{split}
\end{align*} 
which under mass-action kinetics, is governed by \eqref{eq:ex:biphasic_unstable}. Its Jacobian matrix is 
    \eq{ 
        \mm J = \begin{pmatrix}
            -2 & 0 & 0 & 0 \\ 
            1 & -4x_2 & 1 +  x_3/2 & 0 \\ 
            0 & 2x_2 & -7/4 & 0 \\ 
            1 & 0 & 3/4 & -1 
        \end{pmatrix}.
    }
We can solve for the nonnegative steady states, or check that 
    \eqn{
    \begin{split}
    \label{eq:unstable_biphasice-ss_curves}
        x_1^*(u) &= \frac{u}{2} , \\ 
        x_2^*(u) &= \frac{\sqrt 7}{2} \sqrt{5 \pm \sqrt{25-2u}}, \\ 
        x_3^*(u) &= 5 \pm \sqrt{25-2u} , \\
        x_4^*(u) &
            = \frac{u}{2} + \frac{15}{4} \pm \frac{3}{4} \sqrt{25 - 2u}
    \end{split}
    }
are two branches of steady states. 

Using Mathematica or other tools, we can compute the eigenvalues along the steady state curves. \Cref{fig:unstable_biphasic-evals} are the plots of the eigenvalues.

\begin{figure}[ht]
\centering 
\begin{subfigure}[t]{0.35\textwidth}
    \centerline{ 
    \includegraphics[height=1.35in]{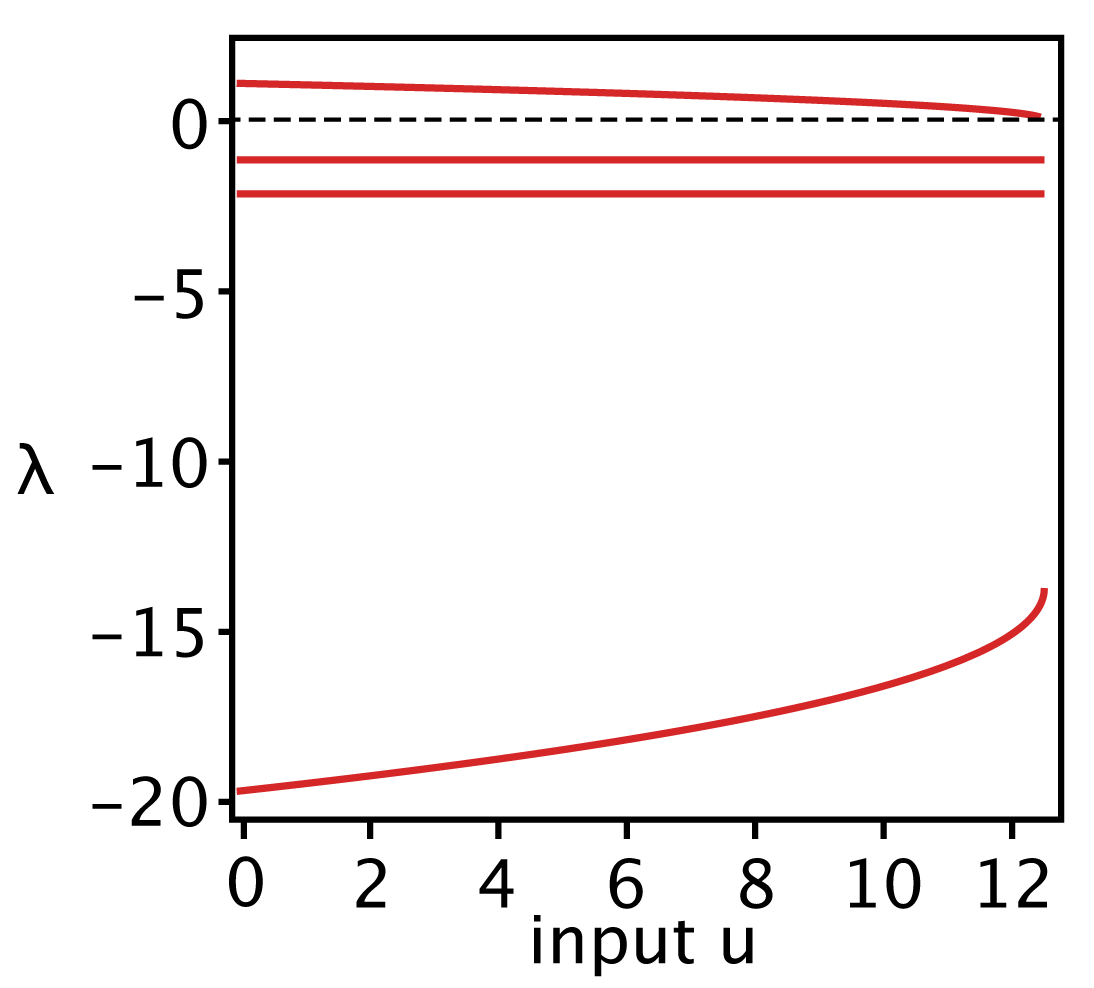} 
    }
    \caption{}
    \label{fig:unstable_biphasic-evals-a}
\end{subfigure}
\begin{subfigure}[t]{0.35\textwidth}
    \centerline{ 
    \includegraphics[height=1.35in]{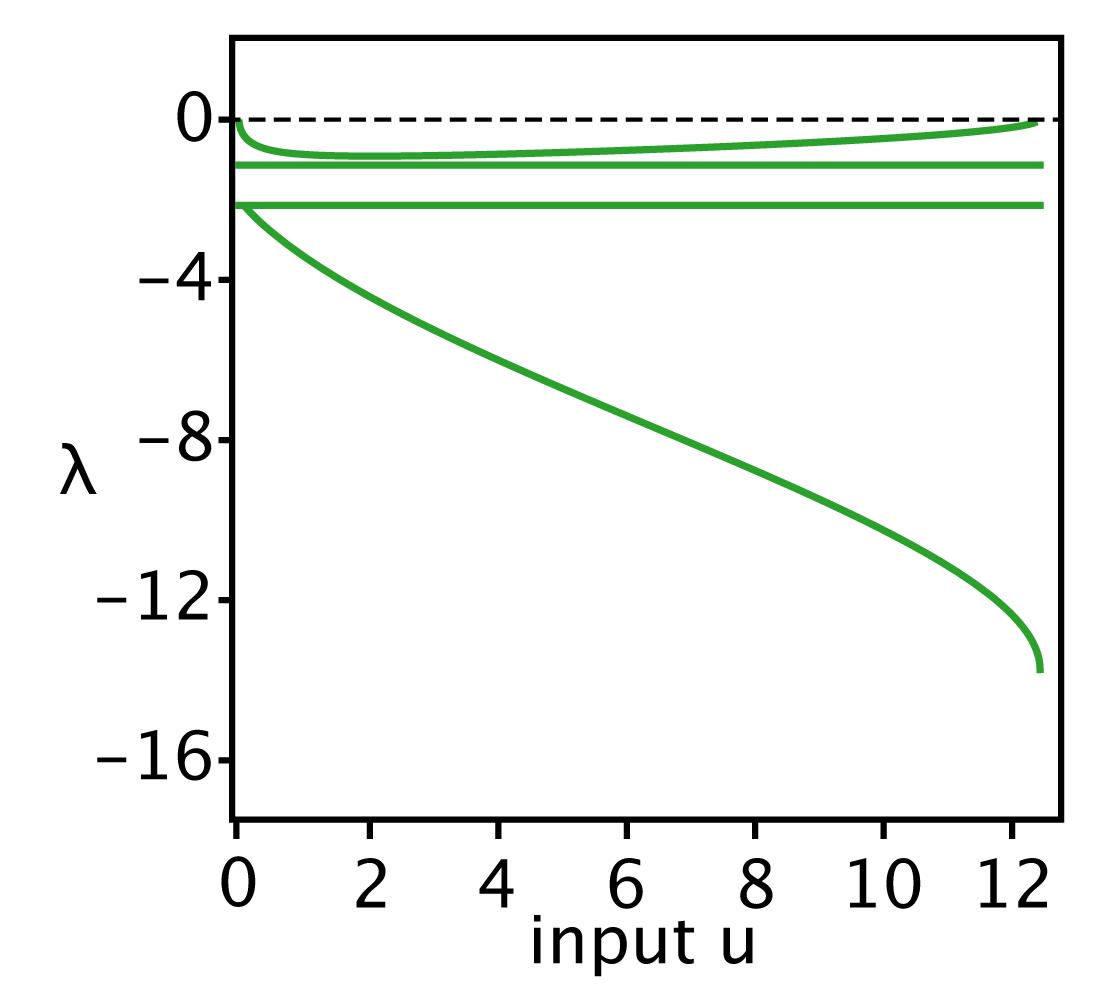} 
    }
    \caption{}
    \label{fig:unstable_biphasic-evals-b}
\end{subfigure}
\caption{
    Eigenvalues along the steady state curves \eqref{eq:unstable_biphasice-ss_curves}. (a) Corresponding to the unstable branch in \Cref{fig:biphasic_unstable}, and (b) the stable branch. 
}
\label{fig:unstable_biphasic-evals}
\end{figure}

\bigskip
\subsubsection{Derivation of steady state}

For completeness, we derive the expression \eqref{eq:unstable_biphasice-ss_curves} for the steady state curve  for the system \eqref{eq:ex:biphasic_unstable}.  Substituting $x_1 = u/2$ from $\dot x_1 = 0$ and  $x_2^2 = 7x_3 /4$ from  $\dot x_3 = 0$ into $\dot x_2 = 0$, we see that 
\eq{ 
    \frac{u}{2} + x_3 - \frac{7}{2} x_3 + \frac{x_3^2}{4} = 0, 
}
which we rescale and then solve using the quadratic formula 
\eq{ 
    x_3 = \frac{10 \pm \sqrt{100 - 8u}}{2}
        = 5 \pm \sqrt{25 - 2u}.
}
Then 
\eq{ 
    x_2 = \frac{\sqrt{7}}{2} \sqrt{5 \pm \sqrt{25 - 2u}}.
}
Finally from $\dot x_4 = 0$, we have 
\eq{ 
    x_4 = \frac{u}{2} + \frac{15}{4} \pm \frac{3}{4} \sqrt{25-2u}.
}

\bigskip
\subsubsection{Expressions for the eigenvalues along steady state curve}

Plotted in \Cref{fig:unstable_biphasic-evals-a} are the eigenvalues of the (unstable) branch $x_4^*(u) = \frac{3}{4} \left( \frac{2u}{3} + 5 + \sqrt{25-2u}\right)$:
    \eq{ 
        \lambda_1 & =-2 \\ 
        \lambda_2 &= -1 \\ 
        \lambda_3 &= \frac{1}{8} \left( 
            -\sqrt{49-48\sqrt 7 \sqrt{5 + \sqrt{25 - 2u} }
                + 448 (5 + \sqrt{25-2u}) 
                + 32 \sqrt 7 (5 + \sqrt{25-2u})^{3/2}
            }
            \right. \\&\qquad\qquad \left. \vphantom{\sqrt{\sqrt{5+\sqrt{25-2u}}}}
            - 8\sqrt 7\sqrt{5 + \sqrt{25-2u}} -7 
            \right)  \\ 
        \lambda_4 &= \frac{1}{8} \left( 
            \sqrt{49-48\sqrt 7 \sqrt{5 + \sqrt{25 - 2u} }
                + 448 (5 + \sqrt{25-2u}) 
                + 32 \sqrt 7 (5 + \sqrt{25-2u})^{3/2}
            }
            \right. \\&\qquad\qquad \left. \vphantom{\sqrt{\sqrt{5+\sqrt{25-2u}}}}
            - 8\sqrt 7\sqrt{5 + \sqrt{25-2u}} -7 
            \right) .
    }
Plotted in \Cref{fig:unstable_biphasic-evals-b} are the eigenvalues of the (stable) branch $x_4^*(u) = \frac{3}{4} \left( \frac{2u}{3} + 5 - \sqrt{25-2u}\right)$: 
    \eq{ 
        \lambda_1 & =-2 \\ 
        \lambda_2 &= -1 \\ 
        \lambda_3 &= \frac{1}{8} \left( 
            -\sqrt{49-48\sqrt 7 \sqrt{5 - \sqrt{25 - 2u} }
                + 448 (5 - \sqrt{25-2u}) 
                + 32 \sqrt 7 (5 - \sqrt{25-2u})^{3/2}
            }
            \right. \\&\qquad\qquad \left. \vphantom{\sqrt{\sqrt{5+\sqrt{25-2u}}}}
            - 8\sqrt 7\sqrt{5 - \sqrt{25-2u}} -7 
            \right)  \\ 
        \lambda_4 &= \frac{1}{8} \left( 
            \sqrt{49-48\sqrt 7 \sqrt{5 - \sqrt{25 - 2u} }
                + 448 (5 - \sqrt{25-2u}) 
                + 32 \sqrt 7 (5 - \sqrt{25-2u})^{3/2}
            }
            \right. \\&\qquad\qquad \left. \vphantom{\sqrt{\sqrt{5+\sqrt{25-2u}}}}
            - 8\sqrt 7\sqrt{5 - \sqrt{25-2u}}-7 
            \right) .
    }
These expressions were obtained using Mathematica.

\bigskip \bigskip 
\newpage 
\subsection{\sc Dynamics of the core kinetic proofreading network}
\label{sec:app-KP_core}

Under mass-action kinetics, the system governing the dynamics of the core network (\Cref{fig:KP-core}) is 
\begin{align}
\begin{split}
\label{eq:app-KP_core_ODE}
    \frac{dL}{dt} &= -\kk LR + \ell_0 C_0 + \ell_1 C_1 + \cdots + \ell_N C_N \\ 
    \frac{dR}{dt} &= -\kk LR + \ell_0 C_0 + \ell_1 C_1 + \cdots + \ell_N C_N \\ 
    \frac{dC_0}{dt} &= \kk LR - (\kk_0 + \ell_0) C_0 \\ 
    & \qquad \vdots \\ 
    \frac{dC_i}{dt} &= \kk_{i-1} C_{i-1} - (\kk_i + \ell_i) C_i , \quad \text{ for } i = 1,2,\ldots, N-1\\ 
    & \qquad \vdots \\ 
    \frac{dC_N}{dt} &= \kk_{N-1} C_{N-1} - \ell_N C_N.
\end{split}
\end{align}
Let $C_T \coloneqq \sum_{i=0}^N C_i$,  $L_T = L + C_T$, and $R_T = R + C_T$. Note that $C_T$, $L$, and $R$ evolve in time, while $L_T, R_T$ are constants indicating the total amount of ligands and receptors, whether in free form or bound form.

\bigskip
\subsubsection{Analytical expression for the steady states of TCR complexes}
\label{sec:app-application-KP_ss}

We claimed in \Cref{sec:application-KP_ss} that $C_i^* = A_i C_T^*$ and 
\begin{align*}
    C_T^* = \frac{L_T + R_T + \frac{A}{\kk}}{2} - \sqrt{ \left( \frac{L_T + R_T + \frac{A}{\kk}}{2}\right)^2 - L_T R_T} , 
\end{align*}
where the constants $L_T, R_T, A,\kk$ are positive. We derive the above expression by following the idea in \citeSM{LeverLimKrugerNguyenEtAl2016b}.

Suppose the system is at steady state. We drop the asterisk to simplify notation. We have 
\eq{ 
    C_{N-1} &= \frac{\ell_N}{\kk_{N-1}} C_N, \\ 
    C_{i-1} &= \frac{\kk_i + \ell_i}{\kk_{i-1}} C_i, \quad \text{ for } i = 2, \ldots, N-1, \\ 
    C_0 &= \frac{\kk_1 + \ell_1}{\kk_0} C_1 ,\\ 
    \kk LR &= (\ell_0 + \kk_0) C_0 = \sum_{j=0}^N \ell_j C_j . 
}
Note that the equations involving $C_i$ can be calculated recursively:
\eq{ 
    &\quad C_0 = \frac{k_1 + \ell_1}{k_0} C_1 , \quad C_1 = \frac{k_2 + \ell_2}{k_1} C_2 , \quad \cdots \\ 
    \implies & \quad C_0 = \frac{k_1 + \ell_1}{k_0}\frac{k_2 + \ell_2}{k_1} C_2 , \quad \cdots 
}
resulting in the relations 
\eq{ 
    C_{N-1} &= \frac{\ell_N }{\kk_{N-1}} C_N , \\ 
    C_i &= \frac{\ell_N \prod_{j=i+1}^{N-1} (k_j + \ell_j) }{\prod_{j=i}^{N-1} k_j} C_N, \quad \text{ for } i=1,2,\ldots, N-2 \\
    C_1 &= \frac{\ell_N \prod_{j=2}^{N-1} (\kk_j + \ell_j)}{\prod_{j=1}^{N-1} \kk_j } C_N 
        = \frac{(\kk_2 + \ell_2)(\kk_3+\ell_3) \cdots  (\kk_{N-1}+\ell_{N-1}) \ell_N}{ \kk_1 \kk_2 \cdots \kk_{N-1}} C_N \\ 
    C_0 &= \frac{(\kk_1 + \ell_1)(\kk_2+\ell_2) \cdots (\kk_{N-1}+\ell_{N-1}) \ell_N   }{ \kk_0 \kk_1 \kk_2 \cdots \kk_{N-1}} C_N . 
}
For simplicity, we will denote the above as $C_i = a_i C_N$, where $a_i > 0$ is the appropriate product. 

Now consider the definition of $C_T$, and substitute for $C_j = a_j C_N$:
\eq{ 
    C_T &= C_0 + C_1 + \cdots + C_N  
    = (a_0 + a_1 + \cdots +  a_{N-1} + a_N) C_N, 
}
where $a_N = 1$. This means that 
\eq{ 
    C_N &= \frac{1}{\sum_{j=0}^{N} a_j} C_T \eqqcolon A_N C_T, \\ 
    C_i &= \frac{a_i}{\sum_{j=0}^{N} a_j}C_T \eqqcolon A_i C_T , \quad \forall i =0,1,2,\ldots, N-1. 
} 

Next, we have the conservation laws $L_T = L + C_T$, and $R_T = R + C_T$, so 
\eq{ 
    LR = (L_T - C_T )(R_T - C_T) = C_T^2 - C_T(L_T + R_T)  + L_TR_T. 
}
On one hand, from $\frac{dC_0}{dt} = 0$, we see that 
\eq{
    kLR = (k_0 + \ell_0) C_0 = A C_T,
}
where $A = (k_0 + \ell_0) A_0$. 
On the other hand, this is equal to $\kk (C_T^2 - C_T(L_T + R_T)  + L_TR_T)$. Thus, 
\eq{ 
     C_T^2 -  C_T(L_T + R_T + A/\kk )  +  L_TR_T = 0 .
}
By the quadratic formula, 
\eqn{ \label{eq:core-CT}
    C_T &= \frac{L_T + R_T + A/\kk}{2} - \sqrt{ \left( \frac{L_T + R_T + A/\kk}{2}\right)^2 - L_TR_T}, 
}
where the positive root is rejected, because $C_T \leq \min \{ L_T, R_T\}$. This last claim follows from 
\eq{ 
    \left( \frac{L_T + R_T + A/\kk}{2}\right)^2 - L_TR_T
    &= \left( \frac{L_T+R_T}{2} \right)^2 - L_TR_T + \frac{A^2}{4k^2} + \frac{A(L_T+R_T)}{k} 
    >0 
}
by the AM-GM inequality.

\bigskip
\subsubsection{Monotonic dependence on total antigen concentration at steady state}
\label{sec:app-CT_monotone}

To show that $\frac{\partial C_T}{\partial L_T} > 0$ for all $L_T > 0$, it suffices to show that the function $g(x) = x+a+b - \sqrt{(x+a+b)^2 - 4ax}$ has positive derivative for all $x > 0$. This is a simple calculation, as $g'(x) = 1 - \frac{(x+a+b) - 2a}{\sqrt{(x+a+b)^2 - 4ax} }$. Thus, $g'(x) > 0$ if and only if 
\eq{
\begin{array}{crcl}
    &\qquad \sqrt{(x+a+b)^2 - 4ax}  &>& (x+a+b) - 2a \\ 
\iff & \qquad  (x+a+b)^2 - 4ax &>& (x-a+b)^2 \\ 
\iff &\qquad  (x+b)^2 + a^2 + 2a(x+b) - 4ax &>& (x+b)^2 + a^2 - 2a(x+b) \\ 
\iff &\qquad -2ax + 2ab &>& -2ax -2ab, 
\end{array}
}
which is clearly true. Thus $\frac{\partial C_T}{\partial L_T} > 0$ holds.

\bigskip \bigskip 
\newpage 
\subsection{\sc System with multiple inputs}
\label{sec:app-multiple_inputs}

Here we give in detail how to extend our results to a system with multiple inputs. We take an abstract system with $3$ variables, in which $2$ of them are influenced by $u$. Consider the system 
\begin{align}
\begin{split}
\label{eq:app-multiple_inputs}
    \dot x_1 &= f_1(\vv x) + u \\ 
    \dot x_2 &= f_2(\vv x) + u \\ 
    \dot x_3 &= f_3(\vv x) , 
\end{split}
\end{align}
where we assume $\frac{\partial f_i}{\partial x_i} < 0$ for each $i$ and every $\frac{\partial f_j}{\partial x_i}$ is constant in sign. We define an extended system by treating $u$ as a variable $x_0$, resulting in the system with control $v$:   
\begin{align}
\begin{split}
\label{eq:app-multiple_inputs-extended}
    \dot x_0 &= -x_0 + v, \qquad x_0(0) = v \\ 
    \dot x_1 &= f_1(\vv x) + x_0 \\ 
    \dot x_2 &= f_2(\vv x) + x_0 \\ 
    \dot x_3 &= f_3(\vv x) . 
\end{split}
\end{align}

Then along a steady state curve $(x_0(v), \xx^*(v))$, which we assume it exists, we can implicitly differentiate with respect to $v$ and obtain 
\eq{ 
    \begin{pmatrix} -1 \\ 0 \\ 0 \\ 0 \end{pmatrix}
    = \begin{pmatrix} -1 & 0 & 0 & 0 \\ 
        1 & \partial_1 f_1 & \partial_2 f_1 & \partial_3 f_1 \\ 
        1 & \partial_1 f_2 & \partial_2 f_2 & \partial_3 f_2 \\ 
        0 & \partial_1 f_3 & \partial_2 f_3 & \partial_3 f_3
    \end{pmatrix}
    \begin{pmatrix} \partial_v x^*_0 \\ \partial_v x^*_1 \\ \partial_v x^*_2 \\ \partial_v x^*_3 \end{pmatrix}. 
}
Let $\mm J$ be the matrix above. Then Cramer's rule tells us that for $j = 1,2,3$, 
\eq{ 
    \partial_v x_j^* = \frac{(-1)^{j+1} \mm J[\hat 1, \widehat{j+1}]}{\det \mm J}. 
}
Note that $\mm J$ is \emph{not} $\frac{\partial \vv f}{\partial \vv x}$, but contains it as a submatrix.

From \eqref{eq:app-multiple_inputs-extended}, clearly $x_0(t) = v$ for all $t \geq 0$. So for $j=1,2,3$, the derivative of $x_j^*$ with respect to $v$ in \eqref{eq:app-multiple_inputs-extended} is equal to the derivative of $x_j^*$ with respect to $u$ in \eqref{eq:app-multiple_inputs}, where the values of $u$ and $v$ are chosen to be the same constant.

\bigskip
\subsubsection{An example with multiple inputs}
\label{sec:app-multiple_inputs-concrete_ex}

As a concrete example, consider the system with output $z$
\eqn{
\begin{split}
\label{eq:app-mult_inputs_concrete_original}
    \dot x &= -x + u \\ 
    \dot y &= -y + u \\ 
    \dot z &= -z + x - y^2 z .
\end{split}
}
The J-graph of this system is shown in \Cref{fig:mutliple_inputs_ex-Jgraph}, since its Jacobian matrix is 
\eq{ 
    \begin{pmatrix} 
        -1 & 0 & 0 \\ 
        0 & -1 & 0 \\ 
        1 & -2yz & -1 -y^2 
    \end{pmatrix} .
}
This system is stable biphasic in $z$; see \Cref{fig:mutliple_inputs_ex-response}. At first sight, it seems as though our results are silent on this system; however, we can extend our results by treating $u$ as a variable.

Consider the following system with control $v$ instead:
\eqn{
\begin{split}
\label{eq:app-mult_inputs_concrete_extended}
    \dot x_0 &= -x_0 + v, \qquad x_0(0) = v \\ 
    \dot x &= -x + x_0 \\ 
    \dot y &= -y + x_0 \\ 
    \dot z &= -z + x - y^2 z .
\end{split}
}
The J-graph of this extended input-output system is \Cref{fig:mutliple_inputs_ex-Jgraph-extended}. \Cref{cor:summary} applies; that $z^*(v)$, where $u=v$, is stable biphasic implies the existence of an IFFL, or both types of FBLs. Note the presence of an IFFL from $x_0$ (formally the control $u$) to the output $z$. Thus, to apply our results, we need to include in the J-graph a node for any control that influences more than one variable. 

\begin{figure}[h]
\centering 
\begin{subfigure}[t]{0.28\textwidth}
    \centering
    \begin{tikzpicture}
            \node[Jnode] (1) at (0,0) {$x$};  
            \node[Jnode] (2) at (0,-1.5) {$y$}; 
            \node[Jnode] (3) at (1.25,-0.75) {$z$}; 
            \draw[posEdge] (1)--(3);
            \draw[negEdge] (2)--(3);
            \draw[posEdge] (3)--(2.25,-0.75);
            \draw[posEdge] (-1.2,0)--(1);
            \draw[posEdge] (-1.2,-1.5)--(2); 
        \node at (-2.25,-0.75){};
        \node at (0,-2.7) {};
    \end{tikzpicture}
    \caption{}
    \label{fig:mutliple_inputs_ex-Jgraph}
\end{subfigure}
\begin{subfigure}[t]{0.28\textwidth}
    \centering
    \includegraphics[height=1.35in]{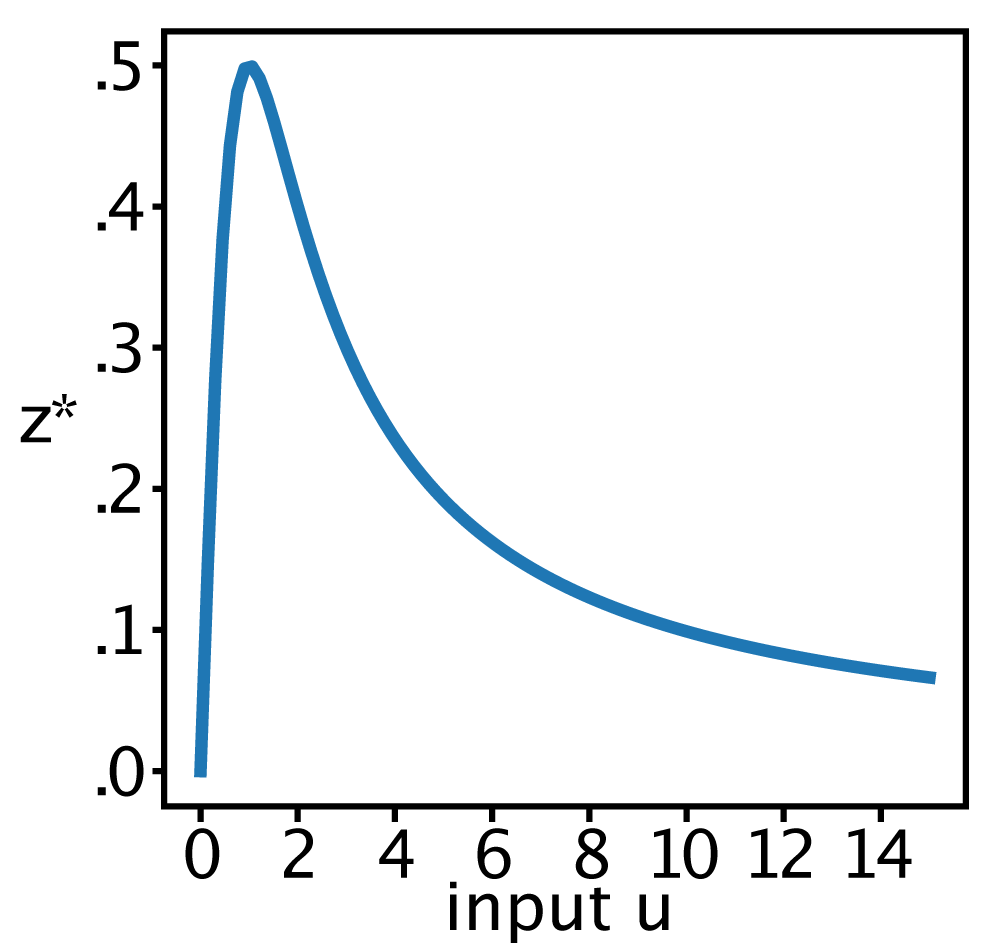} 
    \caption{}
    \label{fig:mutliple_inputs_ex-response}
\end{subfigure}
\begin{subfigure}[t]{0.28\textwidth}
    \centering
    \begin{tikzpicture}
            \node[Jnode] (1) at (0,0) {$x$};  
            \node[Jnode] (2) at (0,-1.5) {$y$}; 
            \node[Jnode] (3) at (1.25,-0.75) {$z$}; 
            \node[Jnode, fill=white] (u) at (-1.25,-0.75) {$x_0$};
            \draw[posEdge] (1)--(3);
            \draw[negEdge] (2)--(3);
            \draw[posEdge] (3)--(2.25,-0.75);
            \draw[posEdge] (u)--(1);
            \draw[posEdge] (u)--(2); 
            \draw[posEdge] (-2.25,-0.75)--(u);
        \node at (0,-2.7) {};
    \end{tikzpicture}
    \caption{}
    \label{fig:mutliple_inputs_ex-Jgraph-extended}
\end{subfigure}
\hspace{1cm}
\caption{
    (a) J-graph of the input-output system \eqref{eq:app-mult_inputs_concrete_original} with multiple inputs.  
    (b) Its steady state response $z^*(u)$ is nonmonotonic. 
    (c) The J-graph of the extended input-output system \eqref{eq:app-mult_inputs_concrete_extended}, where $x_0 = u$. It contains an IFFL, which is necessary for the biphasic response. 
}
\label{fig:mutliple_inputs_ex}
\end{figure}

\ifThisIsArxivVersion
\else 
\vspace{5cm}
\fi
\bibliographystyleSM{IEEEtran}
\bibliographySM{2024_cdc-copy.bib}{}

\end{document}